\newtheoremstyle{mythm}{3pt}{3pt}{}{16pt}{\bfseries}{:}{.5em}{}
\theoremstyle{mythm}
\newtheorem{theorem}{Theorem}
\newtheorem{example}{Example}
\newtheorem{definition}{Definition}
\newtheorem{remark}{Remark}
\newtheorem{corollary}{Corollary}
\newtheorem{lemma}{Lemma}
\newtheorem{construction}{Construction}
\newcommand{\tabcaption}{\def\@captype{table}\caption}
\begin{document}

\title{Improved Constructions of Coded Caching Schemes for Combination Networks
\author{Minquan Cheng, Yiqun Li, Xi Zhong, Ruizhong Wei
}
\thanks{M. Cheng, Y. Li and X. Zhong are with Guangxi Key Lab of Multi-source Information Mining $\&$ Security, Guangxi Normal University,
Guilin 541004, China, (e-mail: chengqinshi@hotmail.com).} 
\thanks{R. Wei is with Department of Computer Science, Lakehead University, Thunder Bay, ON, Canada, P7B 5E1,(e-mail: rwei@lakeheadu.ca).}
}
\date{}
\maketitle

\begin{abstract}
In an $(H,r)$ combination network, a single content library is delivered to  ${H\choose r}$ users through deployed $H$ relays without cache memories, such that each user with local cache memories is simultaneously served by a different subset of $r$ relays on orthogonal non-interfering and error-free channels. The combinatorial placement delivery array (CPDA in short) can be used to realize a coded caching scheme for combination networks. In this paper, a new algorithm realizing a coded caching scheme for combination network based on a CPDA is proposed such that the schemes obtained have smaller subpacketization levels or are implemented more flexible than the previously known schemes.  Then we focus on directly constructing CPDAs for any positive integers $H$ and $r$ with $r<H$. This is different from the grouping method in reference (IEEE ISIT, 17-22, 2018) under the constraint that $r$ divides $H$. Consequently two classes of CPDAs are obtained. Finally comparing to the schemes and the method proposed by Yan et al., (IEEE ISIT, 17-22, 2018) the schemes realized by our CPDAs have significantly advantages on the subpacketization levels and the transmission rates.

\end{abstract}

\begin{IEEEkeywords}
Combination network, coded caching scheme, combinatorial placement delivery array.
\end{IEEEkeywords}
\section{Introduction}
\label{Introduction}
It is well known that coded caching which proposed by Maddah-Ali and Niesen in \cite{MN} is an effective strategy to reduce congestion by placing some content in user local memories during off-peak hours with the hope that the pre-fetched content will generate more coded gain to reduce the number of broadcast transmissions from the server to the users during peak hours. Now coded caching has been widely used to various settings, Gaussian broadcast channels \cite{BWY}, multi-antenna fading channels \cite{NYK,SCK,ZE}, insecure channel \cite{STCD}, D2D networks \cite{JCM},  hierarchical networks \cite{KNAD}  ect.

An idealized version of a hierarchical network, i.e., combination network, has been  widely studied recently such as \cite{JWTLCEL,JTLC,WJPT,WJPT1,Yan,ZY,ZY1,SRDS,WTPJ,SRDSC,WTJP} and so on. In a $(H,r)$ combination network shown in Fig. \ref{fig.1}, a server having $N$ equal size files connects to $H$ relay nodes through independent orthogonal and noiseless links. The relay nodes are connected to $K={H\choose r}$ users in the following way: each user is connected to a distinct subset of $r$ relay nodes. This implies that a content from the server is delivered to the users through the relays, such that each user is simultaneously served by $r$ relays via orthogonal non-interfering channels. Here each relay $h$ just broadcasts the intermediate signals from the server to the users, which are associated with it.
Assume that each user has a storage capacity of $M$ files. A $(H,r,M,N)$ coded caching scheme consists of two independent phases: placement phase, which determines the contents to be cached by each users, and delivery phase, which determines the coded signal to be transmitted over each link for every possible requirement. The maximum amount of the transmission from the server to each relay $h$ in the delivery phase is called the rate $R_h$ which should be as small as possible.
\begin{figure}[h]
  \centering
  % Requires \usepackage{graphicx}
  \includegraphics[scale=0.4]{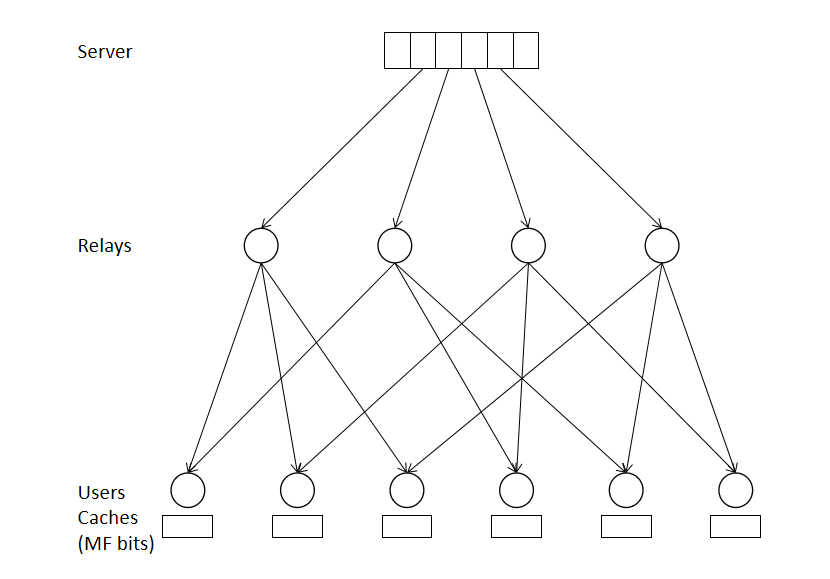}
  \caption{A combination network with $H=4$, $r=2$}\label{fig.1}
\end{figure}

The first study was proposed by Ji et al. in \cite{JWTLCEL,JTLC} for the case when $r$ divides $H$ (denote by $r|H$). \cite{WJPT} studied the case derived the lower bound on the rate by means of cyclic index coding outer bound. \cite{WJPT1} proposed a scheme such that the maximum coded gains from both the users and the relays are obtained. Up to now, the rate of this scheme is minimum among the previously known schemes for small $M$. However the above discussions are proposed under the assumption that each file in the library can be divided into any large equal packets, i.e., the packet number $F$ of each file divided grows exponentially with user number $K$. Clearly this limitation is critical in practice. Furthermore, the packet number $F$ represents the complexity of the scheme, i.e., the complexity of a scheme increases when $F$ increases. According to the structure of the combination network, the authors in \cite{Yan} proposed a concept of combinatorial placement delivery array (CPDA in short) which is a expanding the concept of PDA and can be used to realize a coded caching scheme for combination network. When $r|H$, by grouping method, the CPDAs can be obtained based on PDAs. Consequently the related coded caching schemes for the $(H,r)$ combination network are obtained. The main result in \cite{Yan} is that for any positive integers $H$, $r$, $M$ and $N$ satisfying $r|H$ and $\frac{M}{N}\in\{\frac{1}{K_1}$, $\frac{2}{K_1}$, $\ldots$, $\frac{K_1-1}{K_1}\}$ where $K_1={H-r\choose r-1}$, there exists a scheme with the rate $\frac{{H\choose r}}{1+K_1\frac{M}{N}}\left(1-\frac{M}{N}\right)$ and the packet number $r{K_1\choose K_1\frac{M}{N}}$ for a $(H,r)$ combination network. Yan et al. in \cite{Yan} pointed out that the packet number of this scheme is much smaller than the packet number of the scheme in \cite{WJPT1}. It is  not difficult to check the packet number in this scheme still grows exponentially with user number $K={H\choose r}$. Furthermore, the schemes in \cite{Yan} abandoned many coded gains, which could be generated among the users, due to the grouping method. It is well known that
the value of coded gains are very important for reducing the rate.

There are some other studies with additional conditions for combination networks. For example, references \cite{ZY1,WTPJ,ZY,SRDSC} studied caching schemes for the combination network equipped with caches at both the relay nodes and the end users by means of Maximum Distance Separable (MDS) codes and interference alignment.

In this paper we consider the schemes for a $(H,r)$ combination network when $N\geq K$ by means of constructing CPDAs. Due to the deficiencies of the grouping method, we first propose a new algorithm which can be used to realize a coded caching scheme for combination network based on a CPDA, and then directly construct some CPDAs such that the coded gain of the scheme realized by directly constructing CPDA is larger than that of the schemes realized by the CPDAs in \cite{Yan}. Different from the method of realizing a scheme based a CPDA in \cite{Yan}, given the same CPDA, the packet number of the scheme realized by our algorithm is smaller than that of the scheme in \cite{Yan}. Furthermore, each coded signal is completed the transmission by several relays simultaneously. This is also important in practice when the size of transmission data is very large at each time slot in our algorithm. As for the direct construction, we first show that the previously known PDAs in \cite{YTCC}, which is obtained by the results on strongly coloring in bipartite graph proposed by Jennifer \emph{et al.} in \cite{Jen1997subset}, are also CPDAs. As a generalization of above CPDA, a new class of CPDA is obtained. It is worth noting that our CPDAs hold for any positive integers $H$ and $r$ with $r<H$. By performance analysis, we show that the scheme realized by our CPDAs has much smaller packet number than that of the scheme in \cite{Yan} while the transmission rate increases just several times. Furthermore, the schemes realized by the above two classes of CPDAs have both  packet number and the transmission rate smaller than that of the schemes generated by grouping method proposed in \cite{Yan} based on the original PDAs in \cite{YTCC}. This implies that considering the direct construction of CPDAs is useful.

The rest of this paper is organized as follows. Section \ref{preliminaries} reviews the system model of combination network and the concept of a CPDA. Section \ref{sec-algorithm} introduce a new algorithm to realize a scheme based on a CPDA. In Section \ref{sec-new-scheme}, the fact that the PDA in \cite{YTCC} is a CPDA is shown and a new class of CPDAs is constructed. Our new scheme is then
proposed. In Section \ref{sec-peformance}, the comparisons between the schemes in \cite{Yan} and our new schemes are proposed. Finally conclusion is drawn in Section \ref{conclusion}.

\section{Preliminaries}
\label{preliminaries}
For any positive integer $H$ and $r$ with $r<H$, let $[H]=\{1,2,\ldots,H\}$ and let ${[H]\choose r}=\{A\ :\  A\subseteq [H], |A|=r\}$, i.e., ${[H]\choose r}$ is the collection of subsets of $[H]$ of size $r$.  We use $|\cdot|$ to represent the cardinality of a set; $A-B=\{x\ |\ x\in A, x\not\in B\}$ for any two sets $A$ and $B$.
\subsection{System Model}\label{sub-system}
In a $(H,r,M,N)$ combination network, a server has access to $N$ files denoted by $\mathcal{W}=\{W_1,W_2,\ldots,W_N\}$, each composed of $E$ independent and identically uniformly distributed (i.i.d.) random bits, and is connected to $H$ relays, say $\mathcal{H}=\{h_1,h_2,\ldots,h_H\}$, through $H$ error-free and interference-free links. All the relays have no memory. The relays are connected to $K={H\choose r}$ users, say $\mathcal{K}$, through $rK$ error-free and interference-free links.  The set of users connected to the relay $h\in\mathcal{H}$ is denoted by $\mathcal{U}_h$ and the set of relays connected to user $k$ is denoted by $\mathcal{H}_k$. So each user is always labeled by $\mathcal{H}_k$ in this paper. In combination network each subset $\mathcal{H}_k$ has exactly $r$ relays and is unique, i.e., $\mathcal{K}={\mathcal{H}\choose r}$. Assume that user $k\in \mathcal{K}$ stores information about the $N$ files in its cache of size $ME$ bits, where $M\in[N]$. We denote the content in the cache of user $k\in \mathcal{K}$ by $Z_k$ and the set of all the cached contents by $\mathcal{Z}\triangleq\{Z_k,k\in \mathcal{K}\}$. A $(H,r,M,N)$ coded caching scheme for combination network consists of two independent phases:
\begin{itemize}
\item Placement phase: During the off peak traffic times, each user $k$ directly accesses to the file library $\mathcal{W}$ and stores an arbitrary function
thereof in its cache memory, subject to the space limitation of $ME$ bits. That is, for each user $k\in \mathcal{K}$, there exists a function $\phi_k\ :\ \mathbb{F}^{NE}_2\longrightarrow \mathbb{F}^{ME}_2$ generates the cache contents $Z_k\triangleq \phi_k(W_n, n=1,2,\ldots,N)$.

\item Delivery phase: During the peak traffic times, assume that each user requests one file from the files library $\mathcal{W}$ randomly and independently. The request file number is denoted by $\mathbf{d}=(d_1,d_2,\cdots,d_{K})$, which indicates that user $k$ requests the $d_k$-th file $W_{d_k}$ for any $d_k\in [N]$ and $k\in \mathcal{K}$. The server sends a message $X_{S\rightarrow h}$ to relay $h\in\mathcal{H}$. Then relay $h$ transmits $X_{S\rightarrow h}$ to user $k\in \mathcal{U}_h$. User $k\in\mathcal{K}$ must recover its desired file $W_{d_k}$ from $Z_k$ and ($X_{S\rightarrow h}: h \in \mathcal{H}_k$). So this phase can be represented by a class of encoding functions and a class of decoding functions:
\begin{itemize}
\item The encoding function from the server to relay $h\in\mathcal{H}$
$$\psi_{S\rightarrow h}\ :\  \mathbb{F}^{NE}_2\times\mathbb{F}^{KME}_2\times[N]^K\rightarrow \mathbb{F}^{L_{S\rightarrow h}}_2$$
generates the transmitted message $X_{S\rightarrow h}\triangleq \psi_{S\rightarrow h}(\mathcal{W},\mathcal{Z},{\bf d})$ as a function of the library $\mathcal{W}$, the cached information of all users $\mathcal{Z}$ and the demand vector ${\bf d}$, where $L_{S\rightarrow h}$ is the size of $X_{S\rightarrow h}$.

\item The decoding function $$\mu_k\ :\ \mathbb{F}^{(L_{S\rightarrow h}\ :\ h\in \mathcal{H}_k)}_2\times \mathbb{F}^{ME}_2\times [N]^K\rightarrow \mathbb{F}^{E}_2$$ decodes the request of user $k$ from all messages received by $k$ and its own cache, i.e.,
    $$W_{d_k}=\mu_k(\{X_{S\rightarrow h}\ :\ h\in \mathcal{H}_k\}, Z_k,{\bf d}).$$
\end{itemize}
The maximum link load (equivalent to download
time) in terms of files for each relay $h$ is defined as
\begin{align*}
R_h=\max_{{\bf d}\in[N]^K}\left\{\  \frac{L_{S\rightarrow h}}{E}\ \right\}.
\end{align*}
\end{itemize}
$R_h$ is called the transmission rate for relay $h$. When a coded caching scheme is implemented, each file must be divided into certain packets. Denote the packet number of each file by $F$. Since the complexity of a coded caching scheme increases when the packet number $F$ increases, we prefer to design a scheme with the transmission rate for each relay and the packet number as small as possible while satisfying all user demands.
\subsection{Placement delivery array for Combination Network}
\begin{definition}(PDA, \cite{YCTC})\label{def1} For positive integers $K$, $F$, $Z$ and $S$, an $F\times K$ array $\mathbf{P}=(p_{j,k})$, $j\in [F]$, $k\in [K]$, composed of a specific symbol $"*"$ and $S$ ordinary symbols $1,\cdots,S$, is called a $(K,F,Z,S)$ placement delivery array (PDA), if it satisfies the following conditions:
\begin{itemize}
\item[C1.] The symbol $"*"$ appears $Z$ times in each column;
\item[C2.] For any two distinct entries $p_{j_1,k_1}$ and $p_{j_2,k_2}$, we have $p_{j_1,k_1}=p_{j_2,k_2}=s$, an ordinary symbol only if
\begin{itemize}
\item[a.] $j_1\neq j_2$, $k_1\neq k_2$, i.e., they lie in distinct rows and distinct columns; and
\item[b.] $p_{j_1,k_2}=p_{j_2,k_1}=*$, i.e., the corresponding $2\times 2$ sub-array formed by rows $j_1$, $j_2$ and columns $k_1$, $k_2$ must be of the following form
$$\left(\begin{array}{cc}
       s\ \ *\\
       *\ \ s\\
       \end{array}\right) or
       \left(\begin{array}{cc}
       *\ \ s\\
       s\ \ *\\
       \end{array}\right).$$
\end{itemize}
\end{itemize}
\end{definition}
The first PDA was proposed by Maddah-Ali and Niesen in \cite{MN}. Such a PDA is referred to MN PDA. Based on MN PDA, the following two variants of PDAs are obtained.
\begin{lemma}(\cite{MN,YCTC})\label{le-MN-C}
For any positive integers $k$ and $t$ with $0<t<k-1$, we have the following PDAs.
\begin{itemize}
\item MN PDA: $(t+1)$-$(k,{k\choose t},{k-1\choose t-1},{k\choose t+1})$ PDA \cite{MN,YCTC};
\item Variant of MN PDA: $({k\choose t},k,t,{k\choose t+1})$ PDA \cite{CYTJ};
\item  Variant of MN PDA: $({k\choose t+1},{k\choose t},{k\choose t}-(t+1),k)$ PDA \cite{CYTJ};
\end{itemize}
\end{lemma}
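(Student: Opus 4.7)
The plan is to exhibit each of the three arrays via a standard combinatorial indexing and then verify conditions C1 and C2 of Definition \ref{def1} by direct counting together with a short set-containment argument. All three constructions are present in the cited literature (\cite{MN,YCTC,CYTJ}), so my aim is a clean, self-contained verification rather than a discovery.

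For the MN PDA I would index rows by the $t$-subsets $T\subseteq[k]$ and columns by the elements $j\in[k]$, setting $p_{T,j}=*$ when $j\in T$ and $p_{T,j}=T\cup\{j\}$ (viewed as a $(t+1)$-subset and used as an ordinary symbol) otherwise. Condition C1 is the count of $t$-subsets of $[k]$ containing a fixed $j$, namely $\binom{k-1}{t-1}$. For C2, if $p_{T_1,j_1}=p_{T_2,j_2}=S$ as ordinary symbols, then $T_i=S\setminus\{j_i\}$ with $j_1\neq j_2$ both in $S$, so $j_2\in T_1$ and $j_1\in T_2$, forcing $p_{T_1,j_2}=p_{T_2,j_1}=*$; the multiplicity $t+1$ of each $(t+1)$-subset is read off directly. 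The first variant is the transpose of the MN PDA, so its parameters and both PDA conditions follow from the row/column symmetry of Definition \ref{def1}. For the second variant I would index rows by $t$-subsets $T$ and columns by $(t+1)$-subsets $R$, setting $p_{T,R}$ to the unique element of $R\setminus T$ (an ordinary symbol in $[k]$) when $T\subset R$ and $*$ otherwise; each column then has exactly $\binom{t+1}{t}=t+1$ non-star entries, giving $Z=\binom{k}{t}-(t+1)$ with $S=k$.

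The one step that requires real care is the ``only if'' direction of C2 for the third array. There, if $p_{T_1,R_1}=p_{T_2,R_2}=j$ as ordinary symbols, then $R_i=T_i\cup\{j\}$ with $j\notin T_i$, and one must rule out the possibility $T_1\subset R_2$ in the off-diagonal cell. A size count shows that $T_1\subset T_2\cup\{j\}$ combined with $|T_1|=|T_2|=t$ and $j\notin T_1$ forces $T_1=T_2$, which then forces $R_1=R_2$, contradicting distinctness; hence $p_{T_1,R_2}$ and by symmetry $p_{T_2,R_1}$ are stars, as required. I do not anticipate any other substantive obstacle.
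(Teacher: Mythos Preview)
Your proposal is correct and the constructions you give are exactly the standard ones from \cite{MN,YCTC,CYTJ}; the verifications of C1 and C2 are accurate, including the size-count argument ruling out $T_1\subset R_2$ in the third array.

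The paper, however, does not prove Lemma~\ref{le-MN-C} at all: it is stated purely as a cited background result, with the three bullet points attributed to \cite{MN,YCTC} and \cite{CYTJ} respectively, and the discussion moves on immediately to Remark~\ref{remark-1}. So there is no ``paper's own proof'' to compare against; your self-contained verification simply supplies what the paper imports by reference. One minor presentational point: your observation that the first variant is the transpose of the MN PDA relies on the fact that conditions C1 and C2 are not quite symmetric in rows and columns (C1 is a per-column count), so you are implicitly using that the MN array also has a constant number of stars, namely $t$, in every \emph{row}; it would be worth stating this explicitly rather than leaving it folded into ``row/column symmetry.''
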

\begin{remark}
\label{remark-1}
The authors in \cite{CYTJ} showed that for the fixed values of $K$, $F$ and $Z$, all the PDAs in Lemma \ref{le-MN-C} have the minimum value of $S$. Furthermore for the fixed values of $K$, $\frac{Z}{F}$ and $\frac{S}{F}$, all the PDAs in Lemma \ref{le-MN-C} have the minimum value of $F$.
\end{remark}

Yan et al., in \cite{YTCC} showed that the results on the strongly coloring in bipartite graph proposed by Jennifer et al. in \cite{Jen1997subset} are equivalent to the following PDAs.
\begin{lemma}(\cite{YTCC})\label{le:subset} Let $H$, $r$, $b$, $\lambda$ be positive integers satisfying $0<r,b<H$, $\lambda\leq\min\{r,b\}$ and $r+b-2\lambda<H$. There exists a $(K,F,Z,S)$ PDA, where
\begin{align}
&K={H\choose r},~ F={H\choose b},~Z={H\choose b}-{r\choose \lambda}{H-r\choose b-\lambda},\notag\\
&S={H\choose r+b-2\lambda}\cdot\min\left\{{H-(r+b-2\lambda)\choose \lambda},{r+b-2\lambda\choose r-\lambda }\right\}.\notag
\end{align}
\end{lemma}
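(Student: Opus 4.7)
The plan is to construct the PDA explicitly. I would index rows by the $b$-subsets $B\subseteq[H]$ and columns by the $r$-subsets $R\subseteq[H]$, which immediately gives $F=\binom{H}{b}$ and $K=\binom{H}{r}$. Into cell $(B,R)$ I place the symbol $*$ exactly when $|B\cap R|\ne\lambda$. In any fixed column $R$, the non-star rows correspond to $b$-subsets $B$ with $|B\cap R|=\lambda$; their number is $\binom{r}{\lambda}\binom{H-r}{b-\lambda}$ by first choosing the intersection inside $R$ and then the remainder inside $[H]\setminus R$. Subtracting from $\binom{H}{b}$ gives the value of $Z$ asserted in the lemma and verifies condition C1.

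For the ordinary symbols I would attach to a non-star cell $(B,R)$ the symmetric difference $T:=B\triangle R$, which satisfies $|T|=r+b-2\lambda$, together with one of two natural refinements. Scheme (A) uses the pair $(T,\,B\cap R)$, treating $B\cap R$ as a $\lambda$-subset of $[H]\setminus T$, yielding $\binom{H}{r+b-2\lambda}\binom{H-(r+b-2\lambda)}{\lambda}$ distinct labels. Scheme (B) uses the pair $(T,\,B\cap T)$, treating $B\cap T$ as a $(b-\lambda)$-subset of $T$, yielding $\binom{H}{r+b-2\lambda}\binom{r+b-2\lambda}{r-\lambda}$ distinct labels. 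I would bijectively encode whichever family has fewer labels into $\{1,\dots,S\}$, which matches the formula for $S$ in the lemma. Note that the hypothesis $r+b-2\lambda<H$ is exactly what keeps $[H]\setminus T$ nonempty so that scheme (A) is well defined, while $\lambda\le\min\{r,b\}$ guarantees the relevant sub-sizes are non-negative.

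The substantive step is verifying C2. Let $(B_1,R_1)$ and $(B_2,R_2)$ be two distinct non-star cells sharing a label. Under (A) they share $T$ and $I:=B\cap R$, so $B_i=I\cup(B_i\cap T)$ with $B_i\cap T$ and $R_i\cap T$ partitioning $T$; distinctness of the two cells forces $B_1\cap T\ne B_2\cap T$, which makes $B_1\ne B_2$ and $R_1\ne R_2$, and then $B_1\cap R_2 = I\cup\bigl((B_1\cap T)\setminus(B_2\cap T)\bigr)$ has size strictly greater than $\lambda$, so a star sits at $(B_1,R_2)$; the argument for $(B_2,R_1)$ is symmetric. Under (B), $T$ and the $T$-parts of $B$ and $R$ are fixed, so only the outer sets $I_1,I_2\subseteq[H]\setminus T$ differ, and then $B_1\cap R_2=I_1\cap I_2$ has size strictly less than $\lambda$ since $|I_1|=|I_2|=\lambda$ with $I_1\ne I_2$. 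I expect this identification, namely that the symmetric-difference structure forces the off-diagonal cells to be starred in each scheme, to be the only non-routine part; the column/row distinctness portion of C2 and the symbol counts are pure bookkeeping.
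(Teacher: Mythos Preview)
Your proposal is correct and is essentially the paper's own construction: the paper cites this lemma from \cite{YTCC} and reproduces it as Construction~\ref{constrct1}, where the two labelings $((A\cup B)-I,\,I)$ and $((A\cup B)-I,\,A-B)$ are exactly your Schemes~(A) and~(B) (noting $(A\cup B)-I=A\triangle B$ and that $A-B$ is the complement of $B\cap T$ in $T$, so the two versions of Scheme~(B) are bijectively equivalent). The paper does not spell out the verification of C1--C2, taking it as known from the citation, but your sketch of that verification is accurate.
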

\begin{remark}
\label{remark-2}
The authors in \cite{Jen1997subset} showed that the number of strong coloring of the bipartite graph generated in \cite{Jen1997subset} is minimum. This implies that for the fixed values of $K$, $F$, $Z$ and the position of stars of the PDAs placed according to the bipartite graph in \cite{Jen1997subset} the PDA obtained by the bipartite graph in \cite{Jen1997subset} has the minimum value of $S$.
\end{remark}

From Subsection \ref{sub-system}, we know that in a $(H,r)$-combination networks each user can be represented by a unique $r$-subset of relays which could serve him/her. The authors in \cite{Yan} proposed a concept of combinatorial PDA (CPDA) which is a special PDA and can be used to realize a scheme for combination networks. Form \cite{YCTC} when we use a PDA to realize a coded caching scheme, each user is regarded as a column. In \cite{Yan} the authors also used this idea. From Subsection \ref{sub-system}, each user can be represented by a $r$-subset of $\mathcal{H}$. So each column of a CPDA is always represented by a $r$-subset of $\mathcal{H}$ in the following definition.
\begin{definition}(\cite{Yan})\label{def2}
 For positive integers $H$, $r$, $F$, $Z$ and $S$, let $K={H\choose r}$. An $F\times {H\choose r}$ array $\mathbf{P}=(p_{j,A})$, $j\in [F]$, $A\in {[H]\choose r}$, composed of a specific symbol $"*"$ and $S$ ordinary symbols $1,\cdots,S$, is called a $(K,F,Z,S)$ combinatorial placement delivery array (CPDA), if it satisfies conditions C1-2 in Definition \ref{def1} as well as the following condition:
\begin{itemize}
\item[C3.] for any ordinary symbol $s\in [S]$, the labels of all columns containing symbol $s$ have nonempty intersection, where the label of a column contains all the symbols
appeared in that column.
\end{itemize}
\end{definition}
Given a $({H\choose r},F,Z,S)$ CPDA, we can obtain a $(H,r,M,N)$ coded caching scheme by the following result.
\begin{lemma}(\cite{Yan})
\label{le-CPDA-CC}
Given a $({H\choose r},F,Z,S)$ CPDA, there exists a $(H,r,M,N)$ coded caching scheme such that $\frac{M}{N}=\frac{Z}{F}$, the packet number is $hF$ and the transmission rate for each relay is $R_h=\frac{S}{HF}$.
\end{lemma}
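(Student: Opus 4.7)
The plan is to follow the classical PDA--to--coded-caching recipe of \cite{YCTC} and upgrade the delivery routing by exploiting Condition C3: every coded broadcast has at least one relay common to all of its intended recipients, so the broadcast can be funneled through that relay without re-encoding.

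\emph{Placement.} Split each file $W_n$ into $HF$ equal sub-packets $W_{n,j,h}$ of size $E/(HF)$ bits, indexed by $(j,h)\in[F]\times[H]$, and for each user $A\in{[H]\choose r}$ put
\[ Z_A=\{W_{n,j,h}\ :\ n\in[N],\ h\in[H],\ p_{j,A}=*\}. \]
Condition C1 gives $|Z_A|=NZH$ sub-packets, i.e.\ a cache of $NZE/F$ bits, so $M/N=Z/F$ and the packet number is $HF$.

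\emph{Delivery.} For each ordinary symbol $s\in[S]$ set $\mathcal{R}_s:=\bigcap_{p_{j,A}=s}A$; by C3, $\mathcal{R}_s\ne\varnothing$, so fix some $h_s\in\mathcal{R}_s$. For every $h\in[H]$, relay $h_s$ broadcasts
\[ X_{s,h}=\bigoplus_{p_{j,A}=s}W_{d_A,j,h}. \]
Every user $A$ in the support of $s$ is connected to $h_s$, hence receives all $H$ of these messages; Condition C2 forces every other summand $W_{d_{A'},j',h}$ to sit in $A$'s cache, and therefore $A$ decodes $W_{d_A,j,h}$ for each $h\in[H]$. Iterating over $s$ recovers all the sub-packets of the requested file that are not pre-cached.

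\emph{Rate accounting and main obstacle.} For each $s$, relay $h_s$ transmits $H$ sub-packets of size $E/(HF)$; if the chooser $s\mapsto h_s$ is balanced so that every relay is selected exactly $S/H$ times, relay $h$ transmits $S\cdot E/(HF)$ bits in total and $R_h=S/(HF)$ uniformly. The main obstacle is precisely establishing this balance: a naive choice only delivers the average rate $S/(HF)$ over the $H$ relays, whereas the per-relay bound requires either using the freedom in $h_s$ when $|\mathcal{R}_s|>1$ or invoking a symmetry of the CPDA---for instance an automorphism group of the construction acting transitively on the relay labels---to ensure that any forced assignments are already evenly spread. This is the step on which the proof should concentrate; the remaining verifications (cache size, per-user decodability, and the packet count) follow mechanically from Conditions C1 and C2 combined with the standard PDA decoding argument of \cite{YCTC}.
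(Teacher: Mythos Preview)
This lemma is quoted from \cite{Yan} and is not proved in the present paper; indeed, the paper only remarks that the packet number in the original proof is exactly $HF$, and then supersedes the result with its own Theorem~\ref{th-CPDA-CCS}, whose per-relay rate \eqref{eq-rate} is in general \emph{not} the constant $S/(HF)$. So there is no in-paper proof to compare against; one can only assess your reconstruction on its merits.

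Your placement and decodability arguments are correct and match the standard PDA translation. The obstacle you isolate, however, is real and---for an arbitrary CPDA---cannot be removed. The extra $H$-fold splitting in your scheme is inert: routing all $H$ pieces $X_{s,1},\dots,X_{s,H}$ through the single relay $h_s$ is the same as routing the undivided $X_s$ through $h_s$, so the per-relay load is governed solely by the multiset $\{\mathcal{R}_s\}_{s\in[S]}$. No choice of $s\mapsto h_s\in\mathcal{R}_s$, and no fractional splitting among the relays of $\mathcal{R}_s$ across the $H$ layers, can equalize the loads unless a Hall-type feasibility condition on that multiset holds; it fails outright, for example, when every $\mathcal{R}_s$ is the same singleton. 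Thus $S/(HF)$ is only the \emph{average} relay rate in general, and it becomes the per-relay rate precisely for CPDAs with enough symmetry---e.g.\ those produced by the grouping method of Lemma~\ref{le-yan-recursive} or the direct constructions of Theorems~\ref{th-main-1} and~\ref{th-main-2}, where each relay lies in the same number of sets $I_s$ of the same size. Your sketch is the right one; the gap you name reflects a limitation of the lemma as stated for general CPDAs rather than a flaw in your argument.
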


In Theorem 1 in \cite{Yan} (i.e., the above Lemma \ref{le-CPDA-CC}) the authors said that the packet number is not larger than $HF$. However from their proof, we can check that the packet number in their Theorem 1 is $HF$. When $r|H$, the authors in \cite{Yan} proposed some CPDAs by grouping method based on PDAs.
\begin{lemma}(\cite{Yan})
\label{le-yan-recursive}
Let $H$, $r$ and $t$ be positive integers with $r|h$. If there exists a $(K_1,F,Z,S)$ PDA with $K_1={H-1\choose r-1}$, a $(K=\frac{H}{r}K_1, rF,rZ,HS)$ CPDA can be obtained by {\bf Transformation 1} in \cite{Yan}.
\end{lemma}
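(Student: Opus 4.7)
The plan is to exhibit the output of Transformation 1 explicitly and then verify, in turn, the parameter list and the three defining axioms C1, C2, C3 of a CPDA. First I would fix notation: since $r \mid H$, Transformation 1 uses this divisibility to stitch $r$ shifted copies of the base PDA together. The resulting array has $rF$ rows (one block of $F$ rows per shift), ${H \choose r}$ columns indexed by the $r$-subsets of $[H]$, and its ordinary alphabet is relabeled as pairs $(h,s)\in [H]\times [S]$, so that the total number of ordinary symbols is $HS$. The crucial design choice of Transformation 1, which I would state up front, is that a symbol $(h,s)$ is placed in a column indexed by $A$ only when $h\in A$. With this convention, C3 is immediate: every column containing $(h,s)$ has $h$ in its $r$-subset label, so those labels share at least the relay $h$.

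Next I would verify the easy parameters. Condition C1 follows because each column labeled $A$ in the new array is assembled from $r$ column-slots of the base PDA (one per shift), each of which contributes $Z$ stars, giving $rZ$ stars in total. The dimensions $rF\times {H\choose r}$ and the alphabet size $HS$ match the claim via the identity $\frac{H}{r}{H-1\choose r-1}={H\choose r}$.

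The main obstacle is the verification of C2, and it is where the transformation must be chosen carefully. Suppose $p_{j_1,A_1}=p_{j_2,A_2}=(h,s)$ for two distinct entries in the new array. By the placement rule, both $A_1$ and $A_2$ contain $h$. Restricting the new array to the ${H-1\choose r-1}=K_1$ columns whose $r$-subset label contains $h$, together with the $F$ rows hosting the shift associated with $h$, recovers (up to relabeling of columns) a copy of the original PDA. Consequently the $(h,s)$-entries within this restricted sub-array must already satisfy the PDA's condition C2, which forces $j_1\neq j_2$, $A_1\neq A_2$, and $p_{j_1,A_2}=p_{j_2,A_1}=*$. The delicate point, which is the crux of the argument, is to check that two entries carrying the same label $(h,s)$ can never straddle two different shifts — otherwise $j_1$ and $j_2$ could lie in different $F$-row blocks and the above reduction to a single copy of the base PDA would fail. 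This is precisely where $r\mid H$ is used: it allows Transformation 1 to confine each symbol $(h,s)$ to a single $F$-row block (indexed by the shift that "aligns" with $h$), so the reduction is clean and C2 transfers verbatim from the base PDA. Once C2 is established, the proof is complete.
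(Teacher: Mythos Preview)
The paper does not supply its own proof of this lemma: it is quoted from \cite{Yan} as a known result, and the construction called \textbf{Transformation~1} is defined only in that reference, not here. So there is no in-paper argument to compare your proposal against.

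Evaluating your sketch on its own merits: the overall plan is the right one, and your treatment of the parameter count, of C1, and of C3 is correct \emph{given} the design you describe (symbols $(h,s)$ placed only in columns $A$ with $h\in A$; each column assembled from $r$ base-PDA columns). The gap is at C2. You assert two structural facts about Transformation~1 --- that every symbol $(h,s)$ is confined to a single $F$-row block, and that restricting to that block together with the columns $A\ni h$ reproduces the base PDA up to relabeling --- but you never write down Transformation~1 explicitly, so neither fact is actually verified. Concretely, you must specify, for each column $A$ and each of the $r$ row-blocks, \emph{which} base-PDA column is inserted there and \emph{which} relay $h\in A$ governs that block; only then can you check that two relays in $A$ never collide in the same block (this is precisely where $r\mid H$ enters, e.g.\ through a partition of $[H]$ into $r$ classes of size $H/r$) and that the restricted sub-array really is the original PDA. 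Without that explicit definition, the reduction ``C2 in the big array $\Rightarrow$ C2 in the base PDA'' is an assertion, not a proof. Your outline would become a complete argument once you write the entry $p_{(i,j),A}$ of the new array as an explicit function of the base-PDA entries and then carry out the C2 check directly.
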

The authors showed that the third PDA in Lemma \ref{le-MN-C} is also a CPDA. That is the following result.
\begin{lemma}(\cite{Yan})
\label{le-yan-2}
There exists a $({H\choose r}, {H\choose r-1}, {H\choose r-1}-r,H)$ -CPDA which gives a $(K,M,N)$ coded caching scheme for $(H,r)$ combination networks with $\frac{M}{N}=1-\frac{r}{{H\choose r-1}}$ and $R_h=\frac{1}{{H\choose r-1}}$ with packet number $F={H\choose r-1}$.
\end{lemma}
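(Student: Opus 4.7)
The plan is to exhibit the claimed CPDA explicitly via a Maddah-Ali--Niesen-type construction and then invoke Lemma~\ref{le-CPDA-CC} to read off the scheme parameters. I label the rows of the array by $(r-1)$-subsets $B\subseteq[H]$ and the columns by $r$-subsets $A\subseteq[H]$, and set $p_{B,A}=s$ where $\{s\}=A\setminus B$ whenever $B\subset A$, and $p_{B,A}=*$ otherwise. The ordinary symbols then range over $[H]$, so $S=H$, and the array has the claimed size $\binom{H}{r-1}\times\binom{H}{r}$.

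Once the array is in place, conditions C1 and C2 are essentially the content of the third variant of the MN PDA recorded in Lemma~\ref{le-MN-C} (taking $k=H$, $t=r-1$), but I will re-derive them in this encoding to keep the argument self-contained. For C1, each column $A$ has precisely $r$ non-star entries, namely those indexed by $B=A\setminus\{s\}$ for $s\in A$, giving $\binom{H}{r-1}-r$ stars per column. For C2, if $p_{B_1,A_1}=p_{B_2,A_2}=s$ with distinct entries, then $B_i=A_i\setminus\{s\}$; and were $B_1\subseteq A_2$, the inclusion $A_1=B_1\cup\{s\}\subseteq A_2$ combined with $|A_1|=|A_2|=r$ would force $A_1=A_2$, a contradiction; hence $p_{B_1,A_2}=*$, and symmetrically $p_{B_2,A_1}=*$.

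The only genuinely new condition is C3, which is the main (though modest) obstacle and is what promotes this PDA to a CPDA. For each ordinary symbol $s\in[H]$, I observe that column $A$ contains $s$ in some row if and only if $s\in A$: in one direction, the row $B=A\setminus\{s\}$ witnesses it; in the other, any appearance of $s$ in column $A$ must come from some $B\subset A$ with $A\setminus B=\{s\}$, which forces $s\in A$. Therefore the columns containing $s$ are exactly the $\binom{H-1}{r-1}$ members of $\binom{[H]}{r}$ that include $s$, and their intersection contains $s$ and is in particular nonempty, giving C3. Feeding the resulting $\bigl(\binom{H}{r},\binom{H}{r-1},\binom{H}{r-1}-r,H\bigr)$ CPDA into Lemma~\ref{le-CPDA-CC} then yields $M/N=Z/F=1-r/\binom{H}{r-1}$, $R_h=S/(HF)=1/\binom{H}{r-1}$, and the stated subpacketization $F=\binom{H}{r-1}$, completing the plan.
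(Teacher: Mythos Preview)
Your construction and verification of C1--C3 are correct, and the array you write down is precisely the $\mathbf{P}'$ of Construction~\ref{constrct1} specialized to $b=\lambda=r-1$; this is exactly how the paper recovers Lemma~\ref{le-yan-2} in Remark~\ref{remark-strong} (as a special case of Theorem~\ref{th-main-1}). So the approach is essentially the same, only argued directly rather than through the general result.

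There is, however, one small slip in the final step. Lemma~\ref{le-CPDA-CC} produces a scheme whose packet number is $HF$, not $F$, so invoking it does not deliver the claimed subpacketization $\binom{H}{r-1}$; it gives $H\binom{H}{r-1}$. The remedy is already implicit in your C3 argument: you showed that the columns containing the ordinary symbol $s$ are exactly the $r$-subsets $A\ni s$, whose intersection is $I_s=\{s\}$ of size~$1$. Feeding this into Algorithm~\ref{alg:CPDA} via Theorem~\ref{th-CPDA-CCS} (rather than Lemma~\ref{le-CPDA-CC}) shows that each coded signal $X_s$ needs no further subdivision, so the packet number is indeed $F=\binom{H}{r-1}$; this matches the paper's computation $(r-\lambda)\binom{H}{b}=\binom{H}{r-1}$ in the proof of Theorem~\ref{th-main-1}.
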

Based on MN PDA, the main result in \cite{Yan} can be obtained as follows using Lemma \ref{le-yan-recursive}.
\begin{corollary}(\cite{Yan})
\label{co-yan-1}
Let $H$, $r$ and $t$ be positive integers with $r|H$ and $t<{H-1\choose r-1}$. There exists a $\left({H\choose r}\right.$, $r{{H-1\choose r-1}\choose t}$, $r{{H-1\choose r-1}-1\choose t-1}$,$\left. H{{H-1\choose r-1}\choose t+1}\right)$ PDA which gives a $(H,r,M,N)$ coded caching scheme with memory ratio $\frac{M}{N}=\frac{t}{{H\choose r-1}}$, transmission rate for each relay $R_{h}=\frac{K(1-M/N)}{H(1+K_1M/N)}$ and packet number $F=r{{H-1\choose r-1}\choose t}$, $h\in\mathcal{H}$.
\end{corollary}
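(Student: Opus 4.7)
The plan is to compose three building blocks already present in the excerpt: the MN PDA of Lemma~\ref{le-MN-C}, the grouping transformation of Lemma~\ref{le-yan-recursive}, and the CPDA-to-scheme conversion of Lemma~\ref{le-CPDA-CC}. The corollary is essentially a substitution; no new combinatorial construction is needed.

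First I would instantiate the first item of Lemma~\ref{le-MN-C} with $k={H-1\choose r-1}$. Writing $K_1={H-1\choose r-1}$ as in Lemma~\ref{le-yan-recursive}, the hypothesis $t<{H-1\choose r-1}$ of the corollary covers the parameter restriction $0<t<k-1$ required by the MN construction. This produces a $\left(K_1,{K_1\choose t},{K_1-1\choose t-1},{K_1\choose t+1}\right)$ PDA.

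Next I would feed this PDA into Lemma~\ref{le-yan-recursive}, which is applicable because $r\mid H$. Substituting $F={K_1\choose t}$, $Z={K_1-1\choose t-1}$ and $S={K_1\choose t+1}$ into the output $(K,rF,rZ,HS)$ CPDA, and using the identity $\frac{H}{r}{H-1\choose r-1}={H\choose r}$ for the column count, I obtain a $\left({H\choose r},\,r{K_1\choose t},\,r{K_1-1\choose t-1},\,H{K_1\choose t+1}\right)$ CPDA, which is precisely the array claimed in the corollary.

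Finally I would invoke Lemma~\ref{le-CPDA-CC}. The memory ratio is $M/N=Z/F$ of the CPDA, which collapses via ${K_1-1\choose t-1}=\frac{t}{K_1}{K_1\choose t}$ to $t/K_1$. The packet number reads off as $r{K_1\choose t}$, and for the per-relay rate I would apply ${K_1\choose t+1}=\frac{K_1-t}{t+1}{K_1\choose t}$ together with $\frac{1}{H}{H\choose r}=\frac{K_1}{r}$ to obtain $R_h=\frac{K_1-t}{r(t+1)}$, then verify in one line that this agrees with the compact form $\frac{K(1-M/N)}{H(1+K_1 M/N)}$ after plugging in $K={H\choose r}$, $K_1={H-1\choose r-1}$ and $M/N=t/K_1$. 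There is no genuine obstacle: the conceptual content is carried entirely by Lemmas~\ref{le-yan-recursive} and~\ref{le-CPDA-CC}, and the only care required is routine binomial bookkeeping to confirm that the three parameter vectors line up.
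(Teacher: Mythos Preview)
Your proposal is correct and matches the paper's own justification exactly: the paper derives Corollary~\ref{co-yan-1} by plugging the MN PDA of Lemma~\ref{le-MN-C} (with $k=K_1={H-1\choose r-1}$) into the grouping construction of Lemma~\ref{le-yan-recursive} and then reading off the scheme parameters via Lemma~\ref{le-CPDA-CC}. The binomial bookkeeping you outline is all that is required; note only that the packet number $r{K_1\choose t}$ in the corollary is the CPDA's $F$ parameter as stated in \cite{Yan}, whereas a literal application of Lemma~\ref{le-CPDA-CC} would multiply by $H$---the paper flags this discrepancy itself in the remark following Lemma~\ref{le-CPDA-CC}.
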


\section{A new algorithm realizing schemes based on CPDAs}
\label{sec-algorithm}
In fact, the CPDAs in \cite{Yan} have a stronger structure since they are obtained by grouping method based on some PDA, i.e. the authors obtained CPDAs by means of {\bf Transformation 1} in \cite{Yan}. Consequently there are some limitations in their schemes. For examples, $r$ must divide $H$ and each packet has to be further divided into a $H$ subpackets with equal size. In fact, given a CPDA, sometimes we can deal with each packet more flexibly. So in this paper we will use the following algorithm, i.e., Algorithm \ref{alg:CPDA} to realize a $(H,r,M,N)$ coded caching scheme based on a $(K,F,Z,S)$ CPDA.
\begin{algorithm}[http!]
\caption{Delivery phase of a caching scheme based on CPDA}\label{alg:CPDA}
\begin{algorithmic}[1]
\Procedure {Placement}{$\mathbf{P}$, $\mathcal{W}$}
\State Split each file $W_i\in\mathcal{W}$ into $F$ packets, i.e., $W_{i}=\{W_{i,j}\ |\ j=1,2,\cdots,F\}$.
\For{$k\in \mathcal{K}$}
\State \begin{eqnarray}
       \label{eq-al-cach}
       \mathcal{Z}_k\leftarrow\{W_{i,j}\ |\ p_{j,k}=*, \forall~i=1,2,\cdots,N\}
       \end{eqnarray}
\EndFor
\EndProcedure
\Procedure{Delivery}{$\mathbf{P}, \mathcal{W},{\bf d}$}
\For{$s=1,2,\cdots,S$}
\State  \begin{eqnarray}\label{eq-coded}
        X_{s}=\bigoplus_{p_{j,k}=s,1\leq j\leq F,1\leq k\leq K}W_{d_{k},j}
        \end{eqnarray}
\State  \begin{eqnarray}\label{eq-coded-divid}
        I_s=\bigcap_{p_{j,k}=s,1\leq j\leq F,1\leq k\leq K}A_k=\{h_{s,1},h_{s,2},\ldots,h_{s,w_s}\},\ \ \ 1\leq w_s<r
        \end{eqnarray}
\State  Divide  $X_{s}$ into $w$ sub-packets, i.e., $X_{s}=\{X_{s,1},\ldots,X_{s,w}\}$
\For{$l=1,2,\cdots,w$}
\State Server sends $X_{s,l}$ to relay $h_l$.
\State Relay $h_l$ sends $X_{s,l}$ to the users which contact with $h_l$
\EndFor
\EndFor
\EndProcedure
\end{algorithmic}
\end{algorithm}
\begin{example}\label{exam-5-10} We can check that the following array is a $(10,5,2,10)$ CPDA.
\begin{small}
\begin{eqnarray}\label{eq-exam-Alg}
\begin{array}{c}
 \ \ \ \ \ \ 123 \ \ \ \ \ 124\ \ \ \ 125\ \ \ \ 134\ \ \ \ 135\ \ \ \ \ 145\ \ \ \ \ 234 \ \ \ \ 235\ \ \ \ \ 245\ \ \ \ \ 345 \\
\mathbf{P}=\left(\begin{array}{cccccccccc}
\ \ 5\ \ &\ \ 6\ \ &\ \ 7\ \ &\ \ 8\ \ &\ \ 9\ \ &\ \ 10\ \ &\ \ *\ \ &\ \ *\ \ &\ \ *\ \ &\ \ \ *\ \ \ \\
2&3&4&*     & *     &*      &8&9&10&*\\
1&*      & *     &3&4&*      &6&7&*      &10\\
 *     &1& *     &2&*      &4&5&*      &7&9\\
 *     & *     &1&*      &2&3&*      &5&6&8
\end{array}\right) \end{array}
\end{eqnarray}
\end{small}
Here each subset is written as a string for short. For instance $\{1,2,3\}$ is written as $123$. Now we will propose a $(5,3,4,10)$ coded caching scheme by $\mathbf{P}$ by Algorithm \ref{alg:CPDA}.
\begin{itemize}
\item Placement phase: From Line 2, each file is split into $F=5$ packets, i.e., $W_n=\{W_{n,j}: j\in [5], n\in [10]\}$. From Line 4, users have the following cache contents respectively:
\begin{eqnarray*}
Z_{123}=\{W_{n,4},W_{n,5}: n\in [N]\}\ \ \ \ \ \ \  Z_{124}=\{W_{n,3},W_{n,5}: n\in [N]\}\\
Z_{125}=\{W_{n,3},W_{n,4}: n\in [N]\}\ \ \ \ \ \ \ Z_{134}=\{W_{n,2},W_{n,5}: n\in [N]\}\\
Z_{135}=\{W_{n,2},W_{n,4}: n\in [N]\}\ \ \ \ \ \ \ Z_{145}=\{W_{n,2},W_{n,3}: n\in [N]\}
\end{eqnarray*}
\item Delivery phase: Assume that users $U_{123}$, $U_{124}$, $U_{125}$, $U_{134}$, $U_{135}$, $U_{145}$, $U_{234}$, $U_{235}$, $U_{245}$, $U_{345}$ request files $W_1$, $W_2$, $W_3$, $W_4$, $W_5$, $W_6$, $W_7$, $W_8$, $W_9$, $W_{10}$ respectively. When $s=1$, by \eqref{eq-coded} the server computes  $$X_1=W_{1,3}\oplus W_{2,4}\oplus W_{3,5}.$$
     By \eqref{eq-coded-divid} we have  $I_1=\{1,2\}$, i.e., the intersection of subsets $\{1,2,3\}$, $\{1,2,4\}$ and $\{1,2,5\}$ which label the first three users. Then the server divides $X_1$ into $|I_1|=2$ sub-packets $X_{1,1}$ and $X_{1,2}$ and sends them to relay $h_1$ and relay $h_2$ respectively. We can see that $U_{123}$ receives subpackets $X_{1,1}$ and $X_{1,2}$ from relay $h_1$ and $h_2$ respectively. It can decode its required packet $W_{1,3}$ from $X_1=(X_{1,1}, X_{1,2})$ since it caches the packets $W_{2,4}$ and $W_{3,5}$. Similarly we can check that user $U_{124}$ can decode required packet $W_{2,4}$ since it can also receive $X_{1,1}$ and $X_{1,2}$ from relays $h_1$ and $h_2$ respectively and caches packets $W_{1,3}$ and $W_{3,5}$, and user $U_{125}$ can decode required packet $W_{3,5}$. We can also check that each user can decode all the requested packets which has not been cached by it. We list all the delivery steps in Table \ref{table-exam-5-10}.
\begin{table}[H]
\centering
\caption{Delivered steps of the $(5,3,4,10)$ coded caching scheme in Example \ref{exam-5-10}.}
\label{table-exam-5-10}
\begin{tabular}{c|c|c|c}
  \hline
  % after \\: \hline or \cline{col1-col2} \cline{col3-col4} ...
  Time slot & Coded Signal & Relays& Transmitted signals \\
  \hline
1 &$X_1=W_{1,3}\oplus W_{2,4}\oplus W_{3,5}$  & $h_1$& $X_{1,1}$\\ \cline{3-4}
 &$=(X_{1,1}\ X_{1,2})$\ \ \ \ \ & $h_2$& $X_{1,2}$\\ \hline
 2 &$X_2=W_{1,2}\oplus W_{4,4}\oplus W_{5,5}$  & $h_1$& $X_{2,1}$\\ \cline{3-4}
 &$=(X_{2,1}\ X_{2,3})$\ \ \ \ \ & $h_3$& $X_{2,3}$\\ \hline
3 &$X_3=W_{2,2}\oplus W_{4,3}\oplus W_{6,5}$  & $h_1$& $X_{3,1}$\\ \cline{3-4}
 &$=(X_{3,1}\ X_{3,4})$\ \ \ \ \ & $h_4$& $X_{3,4}$\\ \hline
4 &$X_4=W_{3,2}\oplus W_{5,3}\oplus W_{6,4}$  & $h_1$& $X_{4,1}$\\ \cline{3-4}
 &$=(X_{4,1}\ X_{4,5})$\ \ \ \ \ & $h_5$& $X_{4,5}$\\ \hline
 5 &$X_5=W_{1,1}\oplus W_{7,4}\oplus W_{8,5}$  & $h_2$& $X_{5,2}$\\ \cline{3-4}
 &$=(X_{5,2}\ X_{5,3})$\ \ \ \ \ & $h_3$& $X_{5,3}$\\ \hline
6 &$X_6=W_{2,1}\oplus W_{7,3}\oplus W_{9,5}$  & $h_2$& $X_{6,2}$\\ \cline{3-4}
 &$=(X_{6,2}\ X_{6,4})$\ \ \ \ \ & $h_4$& $X_{6,4}$\\ \hline
 7 &$X_7=W_{3,1}\oplus W_{8,3}\oplus W_{9,4}$  & $h_2$& $X_{7,2}$\\ \cline{3-4}
 &$=(X_{7,2}\ X_{7,5})$\ \ \ \ \ & $h_5$& $X_{7,5}$\\ \hline
 8 &$X_8=W_{4,1}\oplus W_{7,2}\oplus W_{10,5}$  & $h_3$& $X_{8,3}$\\ \cline{3-4}
 &$=(X_{8,3}\ X_{8,4})$\ \ \ \ \ & $h_4$& $X_{8,4}$\\ \hline
 9 &$X_9=W_{5,1}\oplus W_{8,2}\oplus W_{10,4}$  & $h_3$& $X_{9,3}$\\ \cline{3-4}
 &$=(X_{9,3}\ X_{9,5})$\ \ \ \ \ & $h_5$& $X_{9,5}$\\ \hline
 10 &$X_{10}=W_{6,1}\oplus W_{9,2}\oplus W_{10,3}$  & $h_4$& $X_{10,4}$\\ \cline{3-4}
 &$=(X_{10,4}\ X_{10,5})$\ \ \ \ \ & $h_5$& $X_{10,5}$\\ \hline

\end{tabular}
\end{table}

From Table \ref{table-exam-5-10} each relay transmits $4\times \frac{1}{2}$ packets in all. This implies that the transmission rate
for each relay $h$ is $R_h=\frac{2}{5}$.
\end{itemize}
\end{example}
It is worth noting that the authors in \cite{CYTJ} showed that the array $\mathbf{P}$ in Example \ref{exam-5-10} has the minimum value of $S$ for the fixed $K$, $F$ and $Z$. This implies that for the fixed $K=10$, $F=5$ and $Z=2$, the scheme in Example \ref{exam-5-10} has the minimum transmission rate among the schemes which can be realized based on a CPDA by Algorithm \ref{alg:CPDA}.

\begin{theorem}
\label{th-CPDA-CCS}
For any positive integers $H$, $r$, given a $(K={H\choose r},F,Z,S)$ CPDA, by Algorithm \ref{alg:CPDA}, we can obtain a $(H,r,M,N)$ caching scheme with memory ratio $\frac{M}{N}=\frac{Z}{F}$ such that its packet number is less than or equal to $rF$ and transmission rate for each relay $h$ is
\begin{eqnarray}
\label{eq-rate}
R_{h}=\frac{1}{F}\sum\limits_{s\in[S]}\frac{\psi_h(I_s)}{|I_s|}.
\end{eqnarray}where $I_s$ is defined in \eqref{eq-coded-divid}, and $\psi_h(I_s)=1$ if $h\in I_s$, otherwise $\psi_h(I_s)=0$.
\end{theorem}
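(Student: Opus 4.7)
The plan is to verify the three quantitative claims of the theorem --- memory ratio, subpacketization, and per-relay rate --- while also checking that Algorithm~\ref{alg:CPDA} delivers each user's requested file correctly.

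First, I would establish the memory ratio. By property C1, each column of $\mathbf{P}$ contains exactly $Z$ stars, so line~4 of Algorithm~\ref{alg:CPDA} assigns each user $k$ exactly $Z$ packets per file, for a cached load of $NZ\cdot(E/F)$ bits. Equating this with the budget $ME$ gives $M/N=Z/F$.

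Next, I would verify correctness. Fix a user $k$ and a row $j$ with $p_{j,k}=s$ an ordinary symbol; the user needs $W_{d_k,j}$. By \eqref{eq-coded}, $X_s$ is the XOR of $W_{d_k,j}$ with the packets $W_{d_{k'},j'}$ corresponding to the other entries carrying $s$. Condition C2 forces $j'\ne j$, $k'\ne k$, and $p_{j,k'}=*$, so by \eqref{eq-al-cach} every $W_{d_{k'},j'}$ is already in user~$k$'s cache. Condition C3 guarantees $I_s\ne\emptyset$ and $I_s\subseteq A_k$, so user~$k$ is connected to every relay in $I_s$, receives all the sub-packets $X_{s,1},\ldots,X_{s,w_s}$ broadcast in lines~13--14, concatenates them into $X_s$, and XOR-cancels to recover $W_{d_k,j}$.

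For the subpacketization bound I would invoke the standard worst-case distinct-demand assumption underlying the definition of $R_h$: when all $d_k$ are distinct (possible since $N\ge K$ is the relevant regime), for each pair $(i,j)$ there is at most one column $k$ with $d_k=i$, so the packet $W_{i,j}$ enters the XOR \eqref{eq-coded} for at most one symbol $s$. When it does, Algorithm~\ref{alg:CPDA} divides it into $|I_s|$ equal pieces; otherwise it leaves it intact. Since $I_s\subseteq A_k$ implies $|I_s|\le r$, summing over the $F$ rows gives at most $rF$ sub-packets per file. This is the step I expect to need the most care, because \emph{a priori} different $X_s$'s involving the same packet could force a common refinement much larger than $r$; one must either argue that this cannot happen at the rate-maximising demand, or restrict attention to demand vectors with distinct entries, as is customary in the PDA literature.

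Finally, the rate formula follows by counting: relay $h$ transmits exactly one sub-packet of $X_s$, of size $E/(F|I_s|)$, whenever $h\in I_s$, and nothing otherwise. Summing these contributions over $s\in[S]$ and dividing by $E$ yields \eqref{eq-rate}.
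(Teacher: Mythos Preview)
Your proof is correct and follows essentially the same approach as the paper: verify the memory ratio via C1, establish decodability via C2 (cached side-information) together with C3 (so that $I_s\subseteq A_k$ and every needed sub-packet reaches user $k$), bound the subpacketization using $|I_s|\le r$, and obtain \eqref{eq-rate} by summing the $1/|I_s|$-sized sub-packets that relay $h$ forwards. The paper's own argument is in fact terser than yours on the subpacketization step---it simply asserts that $|I_s|\le r$ yields packet number at most $rF$ without invoking distinct demands or discussing common refinements---so the extra care you flag there goes beyond what the authors themselves provide.
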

\begin{proof}
Given a $({H\choose r},F,Z,S)$ CPDA $\mathbf{P}=(p_{j,k})_{1\leq j\leq F,1\leq k\leq K}$, from Lines 1-6 of Algorithm \ref{alg:CPDA}, we have that each file $W_i\in\mathcal{W}$ is divided into $F$ packets, i.e., $W_{i}=\{W_{i,j}\ |\ j=1,2,\cdots,F\}$, and each user $k$, caches the contents $Z_k=\{W_{i,j}\ |\ p_{j,k}=*, i\in[N]\}$. Since each column has $Z$ stars, user $k$ caches $ZN$ packets. This implies that user $k$ caches $M=\frac{ZN}{F}$ files. So we have $\frac{M}{N}=\frac{Z}{F}$.

For any request vector ${\bf d}=(d_1,d_2,\ldots,d_K)$, let us consider delivery phase in Lines 7-17 of Algorithm \ref{alg:CPDA}. For any packet, say $W_{d_k,j}$, which is requested by user $k$ but has not been cached, by \eqref{eq-al-cach} the entry $p_{j,k}$ must be an integer, denoted by $s$. Assume that there are $g\geq 1$ entries such that $p_{j_1,k_1}=p_{j_2,k_2}=\ldots=p_{j_g,k_g}=p_{j,k}=s$ where $j_l\in [F]$, $k_l\in \mathcal{K}$, and $j_l\neq j$, $k_l\neq k$, $l\in[g]$. By \eqref{eq-coded}, the server computes the coded signal $$X_{s}=\left(\bigoplus_{l=1}^{g}W_{d_{k_l},j_l}\right)\bigoplus W_{d_k,j}.$$
From the property C2-b) of a PDA, we have that $p_{k,j_l}=*$. This implies that user $k$ has cached the packets $W_{d_{k_l},j_l}$ for all $l\in[g]$. So it can decode its requested packet $W_{d_k,j}$ directly if it can receive the coded signal $X_{s}$.

Assume that these column numbers $k_l$ and $k$ are represented by $r$-subsets $A_l$, $l\in [g]$, and $A$. By \eqref{eq-coded-divid}, we have $$I_s=\left(\bigcap_{l=1}^{g}A_l\right)\bigcap A=\{h_{s,1},h_{s,2},\ldots,h_{s,w_s}\}$$ for some positive integer $w_s$. Then the server divides $X_{s}$ into $w_s$ sub-packets, i.e., $X_{s}=\{X_{s,1},\ldots,X_{s,w_s}\}$, and sends $X_{s,i}$ to relay $h_{s,i}$ respectively, $i\in[w_s]$. Since $|I_s|\leq r$ always holds, the packet number is at most $rF$. Finally each relay $h_{s,i}$ transmits $X_{s,i}$ to user $k$ since $k\in \mathcal{H}_{h_{s,i}}$. Clearly user $k$ can receive all the sub-coded signals since $I_s\subseteq A$. Then it can obtain the coded signal $X_s$.

From the above discussion, we have proved that by Algorithm \ref{alg:CPDA}, each user can decode its requesting files. For each relay $h$ and for any integer $s$ occurring in $\mathbf{P}$, if $h\in I_s$ the relay $h$ transmits the sub-coded signal with size $\frac{1}{|I_s|}$ packets by the definition of function $\psi_h(I_s)$. So the total amount of data transmitted by relay $h$ is $\sum\limits_{s\in[S]}\frac{\psi_h(I_s)}{|I_s|}$ packets. Then the transmission rate for relay $h$ in \eqref{eq-rate} can be directly obtained.
\end{proof}

\section{Placement delivery array for combination networks}
\label{sec-new-scheme}
\subsection{Some known combinatorial PDAs}
\label{subsec-strong}
In this subsection, we will show that the PDAs in Lemma \ref{le:subset} are also CPDAs. For the readers' convenience, here we include a short introduction of construction of PDAs in \cite{Jen1997subset,YTCC}. We should point out that we modify the representation of the alphabet of the PDA in \cite{YTCC} for our following proofs.
\begin{construction}(\cite{Jen1997subset,YTCC})
\label{constrct1}For any positive integers $H$, $b$, $r$, $\lambda$ satisfying $0<r,b<H$, $\lambda<\min\{r,b\}$, define
\begin{eqnarray*}
\mathcal{F}={[H]\choose b},\ \  \mathcal{K}={[H]\choose r}\ \ \mathcal{I}={[H]\choose \lambda}\ \ \hbox{and}\ \ \mathcal{I}'={[H]\choose r-\lambda}.
\end{eqnarray*} Then we can obtain
\begin{itemize}
\item a $({H\choose r},{H\choose b},{H\choose b}-{r\choose \lambda}{H-r\choose b-\lambda},{H\choose r+b-2\lambda}{H-r-b+2\lambda\choose \lambda})$ PDA $\mathbf{P}=(p_{B,A})_{B\in \mathcal{F}, A\in \mathcal{K}}$ and
\item a $({H\choose r},{H\choose b},{H\choose b}-{r\choose \lambda}{H-r\choose b-\lambda},{H\choose r+b-2\lambda}{r+b-2\lambda\choose r-\lambda})$ PDA $\mathbf{P}'=(p'_{B,A})_{B\in \mathcal{F}, A\in \mathcal{K}}$
\end{itemize} where
\begin{eqnarray}
\label{eq-rule}
\begin{split}
p_{B,A}=\left\{
\begin{array}{cc}
((A\cup B)-I,I)& \hbox{if} \ A\cap B=I\in \mathcal{I}\\
*&\hbox{Otherwise}
\end{array}
\right.\ \ \ \ \ \ \ \ \ \ \ \ \ \ \hbox{and}\\[0.3cm]
p'_{B,A}=\left\{
\begin{array}{cc}
((A\cup B)-I,A-B)& \hbox{if} \ A\cap B=I\in \mathcal{I}\\
*&\hbox{Otherwise}
\end{array}
\right.\ \ \ \ \ \ \ \ \ \ \ \ \ \
\end{split}
\end{eqnarray}
\end{construction}

\begin{example} When $H=5$, $r=3$, $b=1$, $\lambda=1$, from Construction \ref{constrct1}, we have the following two PDAs
\begin{small}
\begin{eqnarray*}
\label{eq-exam-1}
\begin{array}{c}
\ \ \ 123 \ \ \ \  \ \ \ 124\ \ \ \  \ \ \  125\ \ \ \  \ \ \ 134\ \ \ \  \ \ \ 135\ \ \ \  \ \  145\ \ \ \  \ \ \ 234\ \ \ \  \ \ \ 235\ \ \ \  \ \ \ 245\ \ \ \  \ \ \ 345 \\
\mathbf{P}=\left(\begin{array}{cccccccccc}
(23,1)&(24,1)&(25,1)&(34,1)&(35,1)&(45,1)&*     &*     &*     &*\\
(13,2)&(14,2)&(15,2)&*     & *    &*     &(34,2)&(35,2)&(45,2)&*\\
(12,3)&*     & *    &(14,3)&(15,3)&*     &(24,3)&(25,3)&*     &(45,3)\\
 *    &(12,4)& *    &(13,4)&*     &(15,4)&(23,4)&*     &(25,4)&(35,4)\\
 *    & *    &(12,5)&*     &(13,5)&(14,5)&*     &(23,5)&(45,2)&(34,5)
\end{array}\right) \begin{array}{c}
1\\
2\\
3\\
4\\
5
\end{array}\end{array}
\end{eqnarray*}\end{small}
\begin{small}
\begin{eqnarray}\label{eq-exam-2}
\begin{array}{c}
\ 123 \ \ \ \ \ \ \ \ 124\ \ \ \ \ \ \ \ 125\ \ \ \ \ \ \ \ \ 134\ \ \ \ \ \ \ \ 135\ \ \ \ \ \ \ \  145\ \ \ \ \ \ \ \ 234\ \ \ \ \ \ \ \ 235\ \ \ \ \ \ \ \ 245\ \ \ \ \ \ \ \ 345 \\
\mathbf{P}'=\left(\begin{array}{cccccccccc}
(23,23)&(24,24)&(25,25)&(34,34)&(35,35)&(45,45)&*      &*      &*      &*\\
(13,13)&(14,14)&(15,15)&*      & *     &*      &(34,34)&(35,35)&(45,45)&*\\
(12,12)&*      & *     &(14,14)&(15,15)&*      &(24,24)&(25,25)&*      &(45,45)\\
 *     &(12,12)& *     &(13,13)&*      &(15,15)&(23,23)&*      &(25,25)&(35,35)\\
 *     & *     &(12,12)&*      &(13,13)&(14,14)&*      &(23,23)&(24,24)&(34,34)
\end{array}\right) \begin{array}{c}
1\\
2\\
3\\
4\\
5
\end{array}\end{array}
\end{eqnarray}
\end{small}
Here each subset is written as a string for short. For instance $\{1,2,3\}$ is written as $123$. Replacing $(12,12)$, $(13,13)$, $(14,14)$, $(15,15)$, $(23,23)$, $(24,24)$, $(25,25)$, $(34,34)$, $(35,35)$ and $(45,45)$ by $1$, $2$, $\ldots$, $10$ respectively, the arrays in \eqref{eq-exam-2} and \eqref{eq-exam-Alg} are the same.
\end{example}
\begin{theorem}
\label{th-main-1}
For any positive integers $H$, $r$, $b$, $\lambda$ satisfying $0<r,b<H$, $\lambda\leq\min\{r,b\}$ and $r+b-2\lambda<H$, the PDAs generated by Construction \ref{constrct1} is a CPDA which gives a $(H,r,M,N)$ coded caching scheme  with memory ratio $\frac{M}{N}=1-\frac{{r\choose \lambda}{H-r\choose b-\lambda}}{{H\choose b}}$, transmission rate $R=\frac{{H\choose r+b-2\lambda}}{H{H\choose b}}\cdot\min\left\{{H-(r+b-2\lambda)\choose \lambda},{r+b-2\lambda\choose r-\lambda }\right\}$ and packet number $F<r{H\choose b}$.
\end{theorem}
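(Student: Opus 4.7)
The plan is to exhibit explicitly, for each ordinary symbol $s$, the list of $r$-subsets labelling the columns that carry $s$, read off their common intersection to verify condition C3, and then invoke Theorem \ref{th-CPDA-CCS} to read off the scheme parameters. Lemma \ref{le:subset} already guarantees that $\mathbf{P}$ and $\mathbf{P}'$ satisfy C1 and C2, so what is missing is only property C3 of Definition \ref{def2}.

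For $\mathbf{P}$, an ordinary symbol has the form $s=(X,I)$ with $|X|=r+b-2\lambda$, $|I|=\lambda$, $X\cap I=\emptyset$. The equation $p_{B,A}=s$ forces $A\cap B=I$ and $A\cup B=X\cup I$, which is solved by choosing any $A'\in{X\choose r-\lambda}$ and setting $A=I\cup A'$, $B=I\cup(X\setminus A')$. Every column-label $A$ carrying $s$ therefore contains $I$, so the intersection of those labels contains $I$ and is nonempty, giving C3. For $\mathbf{P}'$, an ordinary symbol has the form $s=(X,Y)$ with $Y\subseteq X$, $|Y|=r-\lambda$. Now $p'_{B,A}=s$ means $A-B=Y$ and $A\cup B=X\cup I$ for some $I\in{[H]\setminus X\choose \lambda}$, so $A=I\cup Y$ and $B=I\cup(X\setminus Y)$ as $I$ varies. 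Every such column-label contains $Y$; since $\lambda<r$ the set $Y$ is nonempty, so C3 again holds.

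Once both arrays are CPDAs, Theorem \ref{th-CPDA-CCS} yields the scheme with $M/N=Z/F=1-{r\choose \lambda}{H-r\choose b-\lambda}/{H\choose b}$. For the rate $R_h=\frac{1}{F}\sum_{s\in[S]}\psi_h(I_s)/|I_s|$, I exploit the fact that both PDAs are invariant under relabelling the ground set $[H]$, so the sum is the same for every $h$. In $\mathbf{P}$ every $|I_s|=\lambda$ (the intersection above is exactly $I$), and the number of symbols with $h\in I$ equals ${H-1\choose \lambda-1}{H-\lambda\choose r+b-2\lambda}$, whence the sum rearranges to $\frac{1}{H}{H\choose r+b-2\lambda}{H-(r+b-2\lambda)\choose \lambda}$. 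In $\mathbf{P}'$ every $|I_s|=r-\lambda$ and the number of symbols with $h\in Y$ equals ${H-1\choose r-\lambda-1}{H-r+\lambda\choose b-\lambda}$; an analogous manipulation yields $\frac{1}{H}{H\choose r+b-2\lambda}{r+b-2\lambda\choose r-\lambda}$. Dividing by $F={H\choose b}$ and taking whichever of the two CPDAs gives the smaller transmission load produces the $\min\{\cdot,\cdot\}$ expression in the theorem. The packet-number bound $F<r{H\choose b}$ follows from the $rF$ estimate in Theorem \ref{th-CPDA-CCS} together with the fact that every $|I_s|$ here equals $\lambda$ or $r-\lambda$, both strictly smaller than $r$.

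The main delicate step is the correct parameterisation of the columns supporting each symbol, because the role of the $\lambda$-set $I$ is opposite in the two constructions: in $\mathbf{P}$ the set $I$ is the common intersection while $A'\subseteq X$ ranges freely, whereas in $\mathbf{P}'$ the role of the common intersection is played by $Y\subseteq X$ and the free parameter is $I\subseteq [H]\setminus X$. Once this dichotomy is made explicit, both the C3 check and the symmetry-plus-binomial counting for $R_h$ become routine, and no further algorithmic surgery is needed because Algorithm \ref{alg:CPDA} already handles the sub-packetisation produced by these CPDAs.
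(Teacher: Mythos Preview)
Your proof is correct and follows essentially the same approach as the paper: verify C3 by showing that every column label carrying a given symbol contains the fixed set $I$ (for $\mathbf{P}$) or $Y$ (for $\mathbf{P}'$), then apply Theorem~\ref{th-CPDA-CCS} and count, via the natural symmetry of the ground set $[H]$, the symbols whose intersection set contains a fixed relay $h$. Your explicit parameterisation of the columns supporting each symbol and your observation that the roles of the ``free'' and ``fixed'' $\lambda$-sets are swapped between the two arrays are exactly the computations the paper carries out (the paper states the $\mathbf{P}'$ case only with ``similarly''); the resulting packet counts $\lambda{H\choose b}$ and $(r-\lambda){H\choose b}$ and the rates match.
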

\begin{proof}
Let us consider $\mathbf{P}$ in Construction \ref{constrct1} first. Since $\mathbf{P}$ is already a PDA, in order to show that $\mathbf{P}$ is CPDA, we only need to consider property C3. For any set system $(C,I)$ in $\mathbf{P}$, by \eqref{eq-rule} if column $A$ contains $(C,I)$, then $I\subseteq A$ always holds since $I=A\cap B$ for some $b$-subset $B$. So $\mathbf{P}$ satisfies C3. Similarly we can also show $\mathbf{P}'$ satisfies C3 directly.

From Theorem \ref{th-CPDA-CCS}, the scheme realized by $\mathbf{P}$ has memory ratio $\frac{M}{N}=\frac{Z}{F}=1-\frac{{r\choose \lambda}{H-r\choose b-\lambda}}{{H\choose b}}$. Now let us consider its  packet number and transmission rate. From the first formula in \eqref{eq-rule}, we know that for any set system $(C,I)$ occurring in $\mathbf{P}$, all the $r$-subsets, each of which is used to represent a column containing $(C,I)$, are $\mathcal{K}_{C,I}=\{A'\bigcup I| A'\in{C\choose r-\lambda}\}$. This implies $\bigcap_{A\in \mathcal{K}_{C,I}}A=I$ since $\bigcap_{A'\in{C\choose r-\lambda}}A'=\emptyset$. So from Lines 10-11 in Algorithm \ref{alg:CPDA}, the packet number is $F=\lambda{H\choose b}$. Furthermore, for any relay $h$, there are exactly ${H-1\choose r+b-2\lambda}{H-(r+b-2\lambda)-1\choose \lambda -1}$ set system $(C,I)$ such that $h\in I$. So from Theorem \ref{th-CPDA-CCS}, we have the transmission rate for relay $h$ is
\begin{eqnarray*}
R_{h}&=&\frac{1}{F}\sum\limits_{s\in[S]}\frac{\psi_h(I_s)}{|I_s|}=\frac{1}{F}{H-1\choose r+b-2\lambda}{H-(r+b-2\lambda)-1\choose \lambda -1}\\
&=& \frac{1}{F}\frac{\lambda}{H}{H\choose r+b-2\lambda}{H-(r+b-2\lambda)\choose \lambda}\\
&=& \frac{{H\choose r+b-2\lambda}{H-(r+b-2\lambda)\choose \lambda}}{H{H\choose b}}.
\end{eqnarray*}

Similarly we can also show that the scheme realized by $\mathbf{P}'$ has memory ratio $\frac{M}{N}=\frac{Z}{F}=1-\frac{{r\choose \lambda}{H-r\choose b-\lambda}}{{H\choose b}}$, packet number $F=(r-\lambda){H\choose b}$ and transmission rate for each relay $R_h=\frac{{H\choose r+b-2\lambda}{r+b-2\lambda\choose r-\lambda}}{H{H\choose b}}$. Since $1\leq \lambda, r-\lambda<r$, two schemes realized respectively by $\mathbf{P}$ and  $\mathbf{P}'$ have packet number $F<r{H\choose b}$. We can get the minimum transmission rate between these two schemes $\frac{{H\choose r+b-2\lambda}{H-(r+b-2\lambda)\choose \lambda}}{H{H\choose b}}$ and $\frac{{H\choose r+b-2\lambda}{r+b-2\lambda\choose r-\lambda}}{H{H\choose b}}$.

Then the proof is completed.
\end{proof}
\begin{remark}
\label{remark-strong}
For any positive integer $H$ and $1\leq r<H$, let $b=\lambda=r-1$ in Lemma \ref{le:subset}. Then a $({H\choose r}, {H\choose r-1}, {H\choose r-1}-r,H)$ PDA can be obtained. That is exactly the PDA in the third PDA in Lemma \ref{le-MN-C}.  From Remark \ref{remark-1}, we know such a PDA has the minimum integer $S$ for the fixed values of $K$, $F$ and $Z$. From Theorem \ref{th-main-1}, such a PDA is also a CPDA which is the result in Lemma \ref{le-yan-2}.
\end{remark}
\subsection{Generalized constructions}
\label{sec-genneralzied}
In this subsection, we will propose a new construction of CPDAs which can be regarded as the extension of Construction \ref{constrct1}. The new CPDAs could generate the schemes with flexible memory ratios.
\begin{construction}
\label{constrct2}For any positive integers $H$, $b$, $r$ and $\lambda$ with $b<r+\lambda<H$ and $\lambda<b$, let
\begin{eqnarray*}
\mathcal{F}=\left\{(B, \Gamma)\ |\  B\in{[H]\choose b}, \Gamma\subseteq B, \Gamma\in{[H]\choose \lambda}\right\},\ \ \ \mathcal{K}={[H]\choose r}.
\end{eqnarray*}
Clearly $|\mathcal{F}|={H\choose b}{b\choose \lambda}$ since $|{[H]\choose b}|={H\choose b}$ and  $|{B\choose \lambda}|={b\choose \lambda}$. Define a ${H\choose b}{b\choose \lambda}\times{H\choose r}$ array $\mathbf{P}=(p_{(B,\Gamma),A})_{(B,\Gamma)\in \mathcal{F},A\in \mathcal{K}}$ in the following rule.
\begin{eqnarray}
\label{eq-rule-2}
p_{(B,\Gamma),A}=\left\{
\begin{array}{cc}
((A\cup \Gamma),A-B)& \hbox{if} \ A\cap \Gamma=\emptyset,\  B\subseteq A\cup \Gamma\\
*&\hbox{Otherwise}
\end{array}
\right.
\end{eqnarray}
\end{construction}
\begin{theorem}
\label{th-main-2}For any positive integers $H$, $b$, $r$ and $\lambda$ with $b<r+\lambda<H$ and $\lambda<b$, there exists a $({H\choose r}, {H\choose b}{b\choose \lambda},{H\choose b}{b\choose \lambda}- {H-r\choose \lambda}{r\choose b-\lambda}, {H\choose\lambda+r}{r+\lambda\choose r+\lambda-b})$ CPDA which gives a $(H,r,M,N)$ coded caching scheme with memory ratio $\frac{M}{N}=1-\frac{{H-r\choose \lambda}{r\choose b-\lambda}}{{H\choose b}{b\choose \lambda}}$, transmission rate for each relay $R_h=\frac{{H\choose\lambda+r}{r+\lambda\choose r+\lambda-b}}{H{H\choose b}{b\choose \lambda}}$ and  packet number $F=(r+\lambda-b){H\choose b}{b\choose \lambda}$.
\end{theorem}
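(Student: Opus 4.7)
The plan is to verify that the array $\mathbf{P}$ defined in \eqref{eq-rule-2} satisfies the three CPDA conditions C1, C2, C3 of Definition \ref{def2}, count the number $S$ of distinct ordinary symbols, and then read off the scheme parameters from Theorem \ref{th-CPDA-CCS}. The pivotal structural observation I will establish first is the following: if $p_{(B,\Gamma),A}$ is an ordinary symbol and equals $(C,D)$ (so $C = A \cup \Gamma$ and $D = A - B$), then the row index is uniquely forced, namely $B = C - D$ and $\Gamma = C - A$. Indeed $|C| = r+\lambda$ since $A \cap \Gamma = \emptyset$, and from $\Gamma \subseteq B \subseteq C$ with $|B| = b$ one deduces $B - A = \Gamma$ and $A \cap B = A - D$, so $B = (A - D) \cup \Gamma = C - D$.

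With this observation in hand, C1 reduces to counting the ordinary entries in a fixed column $A$: choose $\Gamma \in {[H] - A \choose \lambda}$ and then $B - \Gamma \in {A \choose b-\lambda}$, giving ${H-r \choose \lambda}{r \choose b-\lambda}$ ordinary entries per column and hence the stated $Z$. For C2, two distinct occurrences of the same symbol $(C,D)$ are forced to share $B = C-D$ but to have distinct $\Gamma$-indices, so they automatically lie in distinct rows and columns; for the sub-array condition at the cross position $\bigl((C-D, \Gamma_1), A_2\bigr)$ with $A_2 = C - \Gamma_2$ I will observe that $A_2 \cap \Gamma_1 = \Gamma_1 \setminus \Gamma_2 \neq \emptyset$ (same cardinality but $\Gamma_1 \neq \Gamma_2$), which makes the entry a star. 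For C3, the columns containing $(C,D)$ are exactly $\{C - \Gamma : \Gamma \in {C-D \choose \lambda}\}$, and their intersection collapses to $C - \bigcup_{\Gamma} \Gamma = C - (C-D) = D$, which is nonempty because the hypothesis $b < r+\lambda$ gives $|D| = r+\lambda - b \geq 1$.

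Once $\mathbf{P}$ is shown to be a CPDA, the ordinary symbols are indexed by the pairs $(C,D)$ with $C \in {[H] \choose r+\lambda}$ and $D \in {C \choose r+\lambda - b}$, giving $S = {H \choose r+\lambda}{r+\lambda \choose r+\lambda - b}$. A crucial bonus of the construction is that $I_s = D$ has the same size $r+\lambda - b$ for every $s$, so Algorithm \ref{alg:CPDA} splits each of the $F_{\mathrm{CPDA}} = {H \choose b}{b \choose \lambda}$ packets into exactly $r+\lambda - b$ equal sub-packets, yielding the announced packet number $(r+\lambda-b){H \choose b}{b \choose \lambda}$. The memory ratio is then $Z/F_{\mathrm{CPDA}}$, and the rate follows from \eqref{eq-rate} by counting the symbols $(C,D)$ with $h \in D$: fixing $h$ gives ${H-1 \choose r+\lambda-1}{r+\lambda-1 \choose r+\lambda-b-1}$, which via the identity ${H-1 \choose r+\lambda-1}{r+\lambda-1 \choose r+\lambda-b-1} = \frac{r+\lambda-b}{H}{H \choose r+\lambda}{r+\lambda \choose r+\lambda-b}$ telescopes into $R_h = \frac{{H \choose r+\lambda}{r+\lambda \choose r+\lambda-b}}{H{H \choose b}{b \choose \lambda}}$ as claimed.

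The main obstacle is the structural observation of the first paragraph, namely that the symbol value $(C,D)$ together with the column $A$ determines the row $(B,\Gamma)$. Once this is set, C2 is essentially free (the occurrences of a given symbol are parameterised by the single $\lambda$-subset $\Gamma$), C3 reduces to the one-line set identity $\bigcap_{\Gamma \in \binom{C-D}{\lambda}}(C-\Gamma) = D$, and the packet-number and rate formulas are then a matter of combinatorial bookkeeping, noting in particular that the constancy of $|I_s|$ is what allows Algorithm \ref{alg:CPDA} to produce a scheme with a clean, uniform sub-packetization.
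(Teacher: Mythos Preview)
Your proposal is correct and follows essentially the same approach as the paper's own proof: verify C1 by counting non-star entries per column, verify C2 via the disjointness condition $A\cap\Gamma=\emptyset$, verify C3 by noting $I\subseteq A$ for every column containing the symbol, then compute $|I_s|=r+\lambda-b$ and count the symbols containing a given relay to obtain $R_h$. Your structural observation that the symbol $(C,D)$ forces $B=C-D$ (so all occurrences of a fixed symbol share the same $B$ and are parameterised solely by $\Gamma$) is a slightly cleaner packaging than the paper's case analysis for C2, but the underlying argument is identical.
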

\begin{proof}
Now let us consider the conditions of a CPDA respectively. By \eqref{eq-rule-2} we have that each column has ${H-r\choose \lambda}{r\choose b-\lambda}$ nonstar entries, i.e., $Z={H\choose b}{b\choose \lambda}- {H-r\choose \lambda}{r\choose b-\lambda}$. Furthermore, we claim that C2 holds. For any set system $(B,\Gamma)\in \mathcal{F}$ and $A\in \mathcal{K}$, if $A\cap \Gamma=\emptyset$ then $A-B=(A\cup \Gamma)-B$. Assume that there are two different entries say $p_{(B_1,\Gamma_1),A_1}=p_{(B_2,\Gamma_2),A_2}=(C,I)\in {[H]\choose r+\lambda}\times {[H]\choose r+\lambda-b}$. Then we have
\begin{eqnarray}
\label{eq-constr-2}
C=A_1\cup \Gamma_1=A_2\cup \Gamma_2,\ \ \ \ \ \ \ I=A_1-B_1=A_2-B_1
\end{eqnarray}
If $A_1=A_2$, we have $\Gamma_1=\Gamma_2$ by the first item of \eqref{eq-constr-2} and $B_1=B_1$ by the second item of \eqref{eq-constr-2}. This contradicts our hypothesis. Similarly if $(B_1,\Gamma_1)=(B_2,\Gamma_2)$, we have $A_1=A_2$ by the first item of \eqref{eq-constr-2}. This is impossible by our hypothesis. If $A_1\neq A_2$ and $(B_1,\Gamma_1)\neq(B_2,\Gamma_2)$, we have $A_1\cap \Gamma_2\neq \emptyset$ and $A_2\cap \Gamma_1\neq\emptyset$ by the first item of \eqref{eq-constr-2}. Then $p_{(B_1,\Gamma_1),A_2}=p_{(B_2,\Gamma_2),A_1}=*$. We can easily check that C3 holds since any column (which is indexed by $A\in \mathcal{K}$) containing set system $(C,I)$ if and only if $I\subseteq A$ by \eqref{eq-rule-2}.
Furthermore there are $S={H\choose r+\lambda}{r+\lambda\choose r+\lambda-b}$ different set systems in $\mathbf{P}$. Then we have a $({H\choose r}, {H\choose b}{b\choose \lambda},{H\choose b}{b\choose \lambda}- {H-r\choose \lambda}{r\choose b-\lambda}, {H\choose\lambda+r}{r+\lambda\choose r+\lambda-b})$ CPDA.

Given a set system $(C,I)\in {[H]\choose r+\lambda}\times {[H]\choose r+\lambda-b}$ occurring in $\mathbf{P}$, each $r$-subset $A'\in {C-I\choose b-\lambda}$ such that $A=A'\cup I$ representing a column contains $(C,I)$. Then the intersection of such $A$s is $I$ since the intersection of all the subsets in ${C-I\choose b-\lambda}$ is empty set. In addition, for each relay $h$, there are exactly ${H-1\choose r+\lambda-1}{r+\lambda-1\choose r+\lambda-b-1}$ set systems occurring in $\mathbf{P}$ contains relay $h$. So by \eqref{eq-rate}
\begin{eqnarray*}
R_{h}=\frac{1}{(r+\lambda -b){H\choose b}{b\choose \lambda}}{H-1\choose r+\lambda-1}{r+\lambda-1\choose r+\lambda-b-1}=\frac{{H\choose r+\lambda}{r+\lambda\choose r+\lambda-b}}{H{H\choose b}{b\choose \lambda}}
\end{eqnarray*}
\end{proof}
\section{Performance analyses}
\label{sec-peformance}
From Remark \ref{remark-strong}, we know that the result in Lemma \ref{le-yan-2} is a special case of the schemes in Theorem \ref{th-main-1}. From Lemma \ref{le-yan-recursive}, we know that the schemes in \cite{Yan} have a strong limitation, i.e., $r|H$. However our schemes in  Theorems \ref{th-main-1} and \ref{th-main-2} have no such a limitation. Furthermore, our schemes have significant advantages on the packet number than the schemes in Corollary \ref{co-yan-1} by sacrificing a little transmission rate, and advantages on the packet number and the transmission rate than that of the schemes generated by Lemmas \ref{le:subset} and \ref{le-yan-recursive}, i.e., the scheme obtained by grouping method. Since the parameters in Theorem  \ref{th-main-1} and \ref{th-main-2} are too complex, here we just list two examples to show that our schemes have significant advantages on the transmission rates or packet numbers.

When $H=20$, $r=4$, we have $K={20\choose 4}=4845$. From Theorems \ref{th-main-1} and \ref{th-main-2} we have two schemes respectively. Choosing the scheme with smaller rate of these two schemes, a new scheme, say Scheme1, can be obtained. We list its transmission rate and the packet number in blue lines of Figures \ref{fig.ny-R} and \ref{fig.ny-F} respectively.
\begin{figure}[h]
  \centering
  \includegraphics[width=10cm,height=6cm]{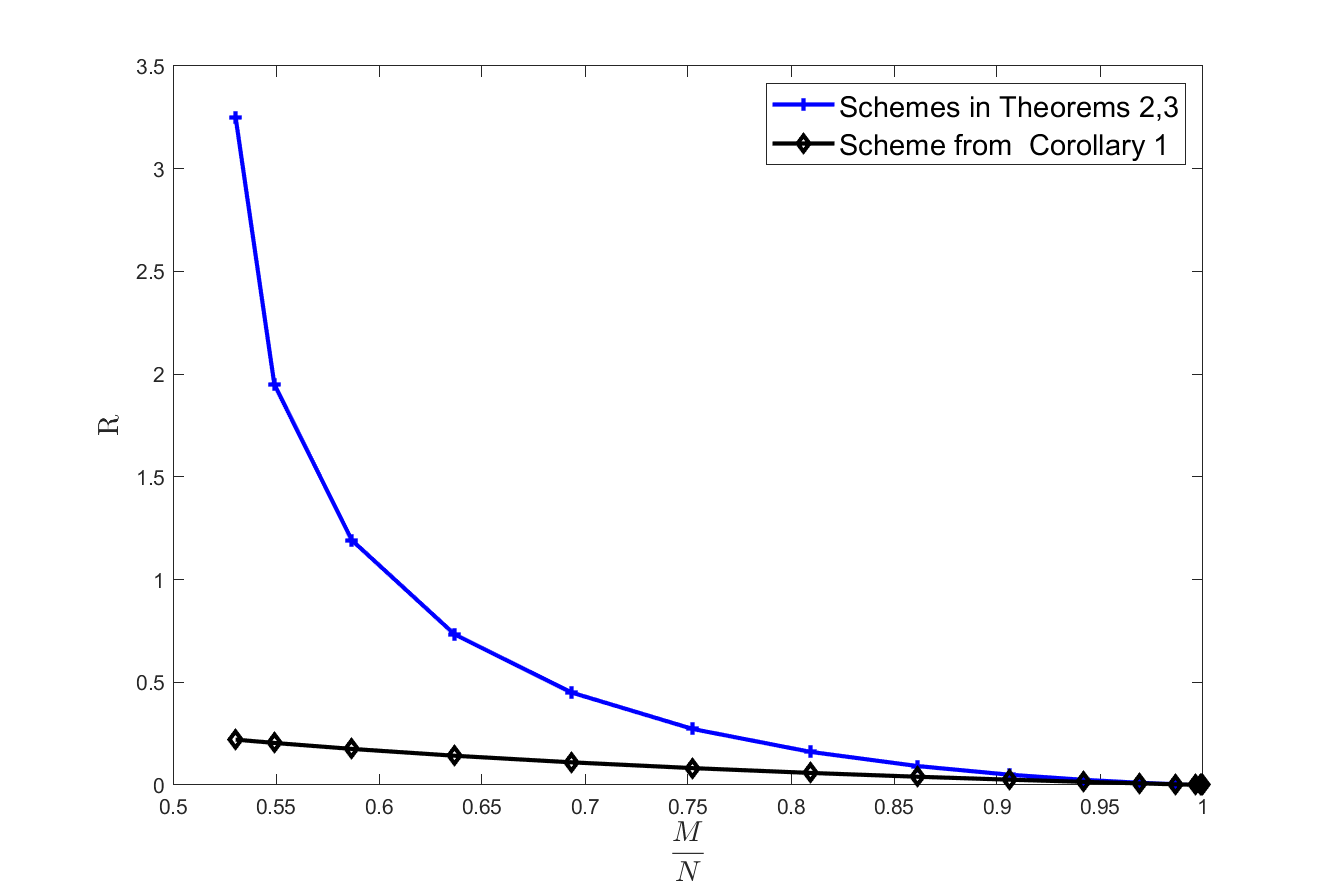}
  \caption{The transmissions for each relay of our new schemes and the scheme from Corollary \ref{co-yan-1}}\label{fig.ny-R}
\end{figure}
\begin{figure}[h]
  \centering
  \includegraphics[width=10cm,height=6cm]{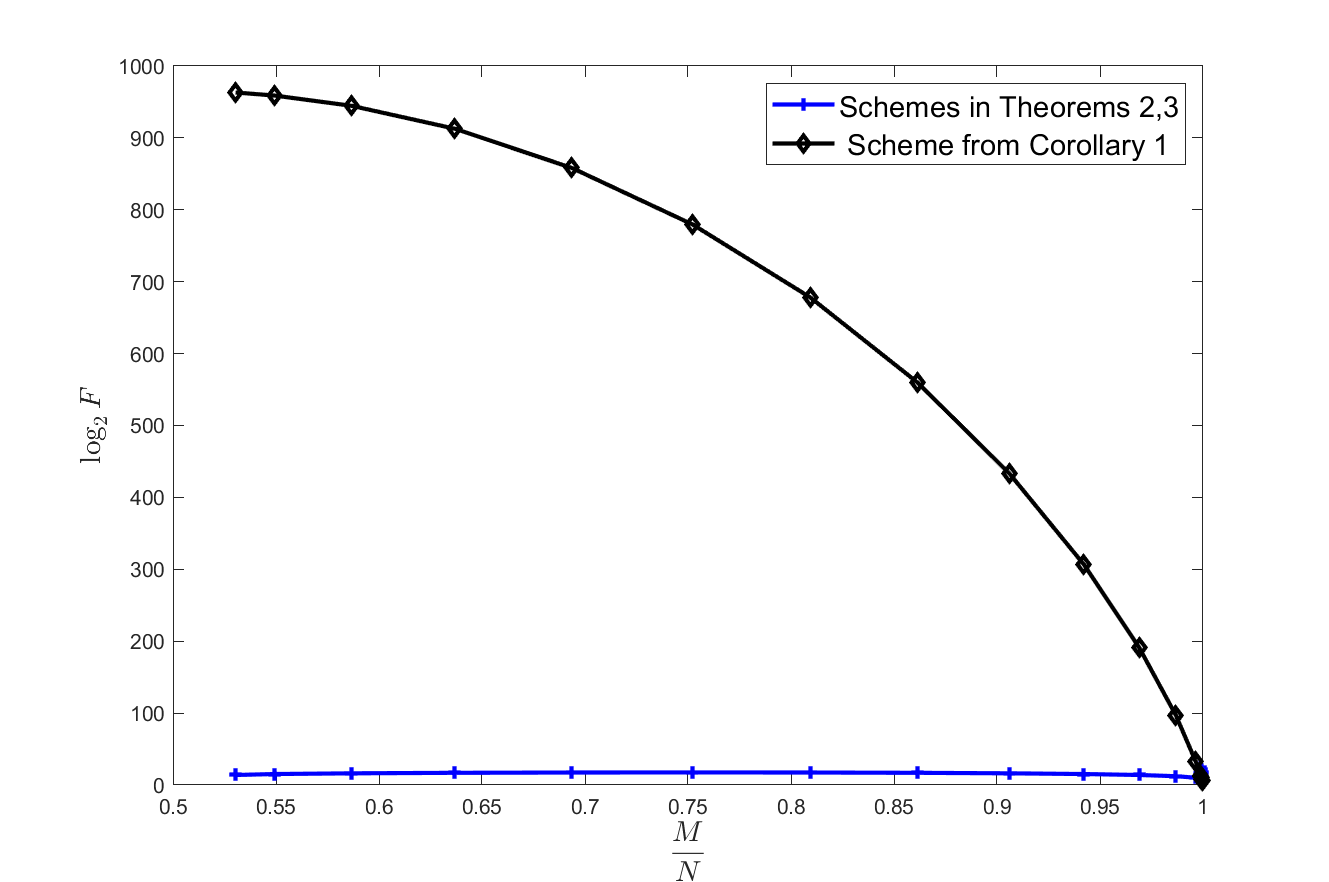}
  \caption{The packet numbers of our new schemes and the scheme from Corollary \ref{co-yan-1}}\label{fig.ny-F}
\end{figure}
From Corollary \ref{co-yan-1}, a scheme, say Scheme2, can be obtained. We list its transmission rate and the packet number in black lines of Figures \ref{fig.ny-R} and \ref{fig.ny-F} respectively. From Figure \ref{fig.ny-R} and \ref{fig.ny-R}, the transmission rate of Scheme1 has just several times larger than that of Scheme2 but the packet number of Scheme1 is far less than that of Scheme2.

Given the $({H-1\choose r-1},{H-1\choose b}, {H-1\choose b}-{r-1\choose \lambda}{H-r\choose b-\lambda},S)$ PDA with $S={H-1\choose r-1+b-2\lambda}\cdot\min\left\{{H-(r+b-2\lambda)\choose \lambda},{r-1+b-2\lambda\choose r-1-\lambda }\right\}$ in Lemma \ref{le:subset}, using Lemma \ref{le-yan-recursive} we have a $({H\choose r},r{H-1\choose b}, r{H-1\choose b}-r{r-1\choose \lambda}{H-r\choose b-\lambda},HS)$ CPDA. From Lemma \ref{le-CPDA-CC}, we have a coded caching scheme, Scheme3, where the transmission rate and packet number are listed in red lines in Figures \ref{fig.nys-R} and \ref{fig.nys-F}. From these two figures, we can see that our scheme has smaller transmission rates for each relay (the blue line in Figure \ref{fig.nys-R}) and packet numbers (the blue line in Figure \ref{fig.nys-F}) than that of Scheme3. So our schemes have better performance than Scheme3.
\begin{figure}[h]
  \centering
  \includegraphics[width=10cm,height=6cm]{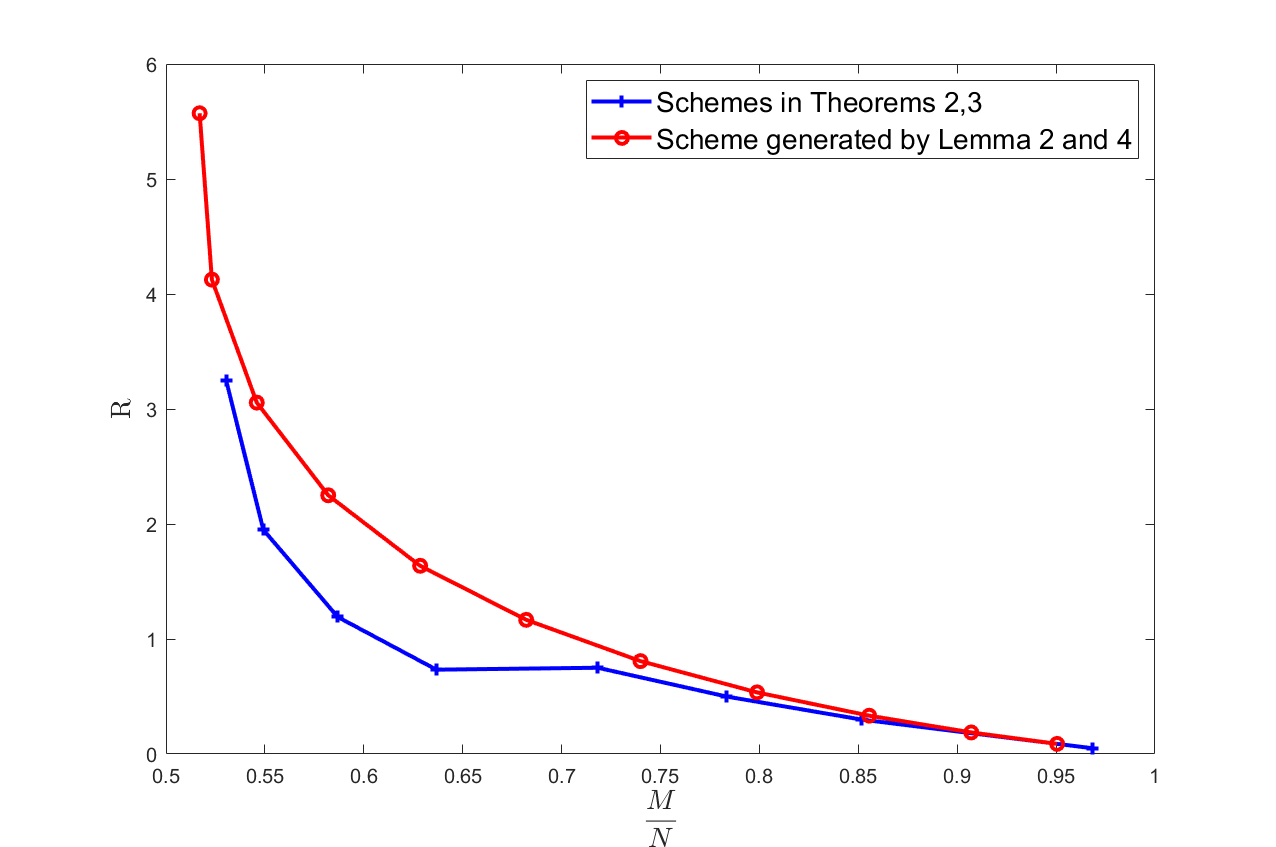}
  \caption{The transmissions for each relay of our new schemes and the scheme generated by Lemma \ref{le:subset} and \ref{le-yan-recursive}}\label{fig.nys-R}
\end{figure}
\begin{figure}[h]
  \centering
  \includegraphics[width=10cm,height=6cm]{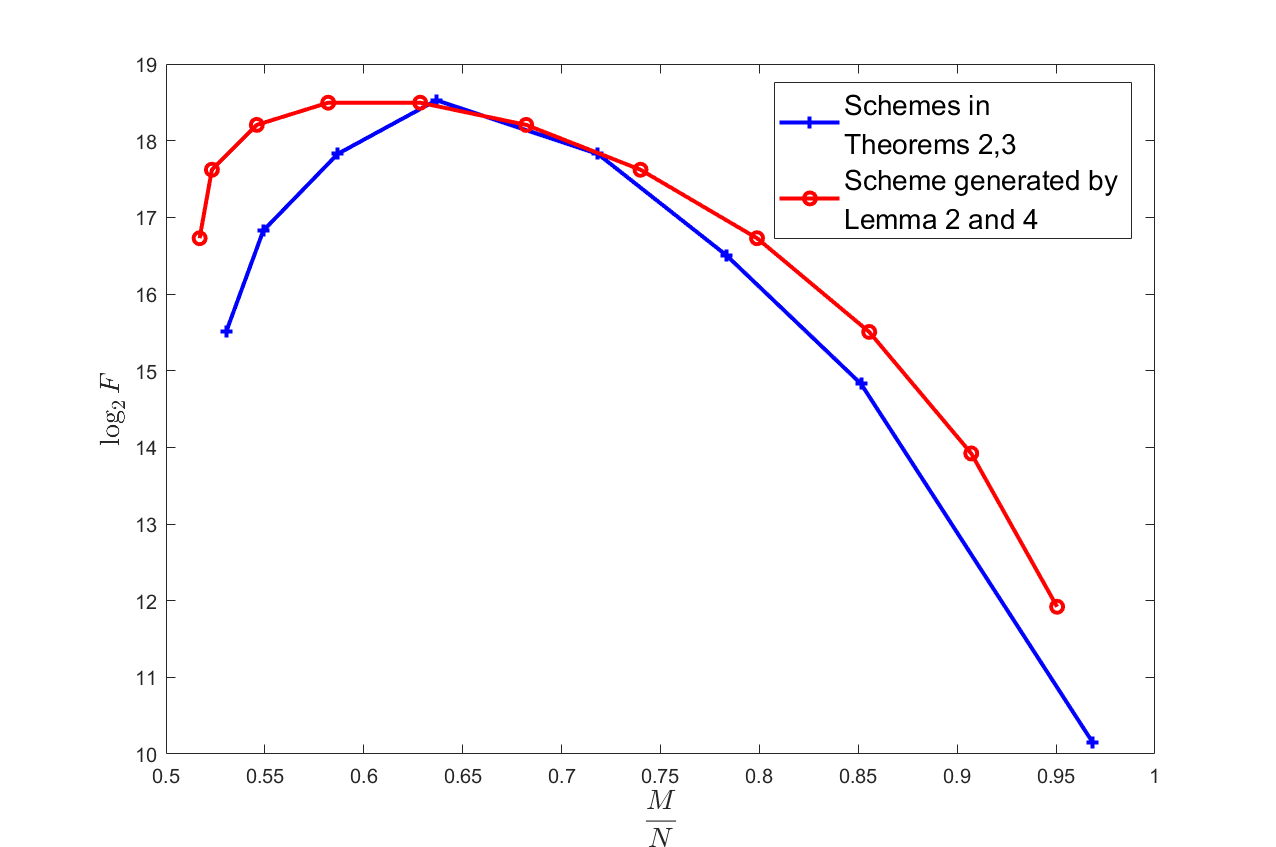}
  \caption{The packet numbers of our new schemes and the scheme generated by Lemma \ref{le:subset} and \ref{le-yan-recursive}}\label{fig.nys-F}
\end{figure}
\section{Conclusion}
\label{conclusion}
In this paper, a new algorithm, which can be used to realize a coded caching scheme for combination network based on a CPDA, was proposed. Consequently the schemes obtained by our algorithm have smaller packet number and are implemented more flexible than the previously known results. Then we focused on directly constructing CPDAs. First we showed that the known PDAs in \cite{YTCC} are exactly CPDAs, and proposed a new CPDAs which can be regarded as an extension of the PDAs in \cite{YTCC}. It is worth noting that our CPDAs hold for any positive integers $H$ and $r$ with $r<H$. Finally we showed that our schemes have better performance than that of the schemes in \cite{Yan}. So considering the direct construction of CPDAs is
interesting and useful.

\end{document}